\def\ps@headings{%
\def\@oddhead{\mbox{}\scriptsize\rightmark \hfil \thepage}%
\def\@evenhead{\scriptsize\thepage \hfil \leftmark\mbox{}}%
\def\@oddfoot{}%
\def\@evenfoot{}}
\definecolor{Darkblue}{rgb}{0,0,0.4}
\definecolor{Brown}{cmyk}{0,0.81,1.,0.60}
\definecolor{Purple}{cmyk}{0.45,0.86,0,0}
\newcommand{\lref}[2][]{\hyperref[#2]{#1~\ref*{#2}}}
\newtheorem{definition}{Definition}
\newtheorem{theorem}{Theorem}
\newtheorem{proposition}[theorem]{Proposition}
\newtheorem{lemma}[theorem]{Lemma}
\newtheorem{corollary}[theorem]{Corollary}
\newcommand{\R}{{\mathcal R}}
\newcommand{\D}{\displaystyle}
\renewcommand{\Pr}[1]{\mbox{\rm\bf Pr}\left[#1\right]}
\newcommand{\Ex}[1]{\mathbb{E}\left[#1\right]}
\newcommand{\growingmid}{\mathrel{}\middle|\mathrel{}}
\title{Jamming-Resistant Learning in Wireless Networks\thanks{This work has been supported by DFG through Cluster of Excellence ``MMCI'' at
Saarland University, UMIC Research Centre at RWTH Aachen University, grant Ho
    3831/3-1 and it was supported by a fellowship within the
    Postdoc-Programme of the German Academic Exchange Service (DAAD).}} %\\ (Full Version)}
\author{ Johannes Dams\thanks{Dept. of Computer Science, RWTH Aachen University, Aachen, Germany. {\tt dams@cs.rwth-aachen.de}.}
\and
Martin Hoefer\thanks{Max-Planck-Institut f\"ur Informatik and Saarland University, Saarbr\"ucken, Germany. {\tt mhoefer@mpi-inf.mpg.de}.}
\and
Thomas Kesselheim\thanks{Dept. of Computer Science, Cornell University, Ithaca, USA.{\tt kesselheim@cs.cornell.edu}.}
}
\date{}
\begin{document}

\maketitle 
%=========================================================================
%  Abstract
%=========================================================================

\begin{abstract}
We consider capacity maximization in wireless networks under adversarial
interference conditions. There are $n$ links, each consisting of a sender and a
receiver, which repeatedly try to perform a successful transmission.
In each time step, the success of attempted transmissions depends on
interference conditions, which are captured by an interference model (e.g. the
SINR model). Additionally, an adversarial jammer can render a
$(1-\delta)$-fraction of time steps unsuccessful. For this scenario, we analyze
a framework for distributed learning algorithms to maximize the number of
successful transmissions. Our main result is an algorithm based on no-regret
learning converging to an $O\left(1/\delta\right)$-approximation.
It provides even a constant-factor approximation when the jammer exactly blocks
a $(1-\delta)$-fraction of time steps. In addition, we consider a stochastic
jammer, for which we obtain a constant-factor approximation after a polynomial
number of time steps. We also consider more general settings, in which links
arrive and depart dynamically, and where each sender tries to reach multiple receivers. 
Our algorithms perform favorably in simulations.
\end{abstract}

\section{Introduction}
The operation of wireless networks critically depends on successful
transmissions in the presence of interference and noise. Understanding the
limits of simultaneous communication in networks is a central aspect in
advancing research in wireless network technologies.
In this work, we will provide an algorithmic approach with theoretical provable
guarantees ensuring good approximation factors to optimize communication. This
approach relies on the notion of no-regret learning known from game theory. To
be usable in highly distributed settings it is important that the algorithms
do only rely on few information about the network. 

One of the central
algorithmic challenges in this domain of wireless communication is referred to
as \emph{capacity maximization}.
The goal is to maximize the number of simultaneous successful transmissions in a given network. More formally, the wireless network is represented by a set of $n$ communication requests (or \emph{links}), each consisting of a pair of sender and receiver. The resulting algorithmic problem is to find a maximum cardinality subset of successful links, where ``successful'' is defined by the absence of conflicts at receivers in an interference model. Most promimently traditional models like disk graphs or the recently popular SINR model~\cite{MoscibrodaW06} are used in such analyses to capture the impact of simultaneous transmission. For example in the SINR model ``success'' (or being conflict-free) is defined by the sum of interference from other links being below a certain threshold.

To this date, many algorithms for capacity maximization that provide provable
worst-case guarantees are
centralized~\cite{GuptaK00,GoussevskaiaOW07,HalldorssonW09,GoussevskaiaHWW09}.
In contrast, wireless networks are inherently decentralized and, hence, there is a
need for algorithms with senders making transmission decisions in a distributed
way not knowing the behavior of other links. 
% It might even be not feasible for
% links to have knowledge about the network structure. This has to be taken into
% account for distributed algorithms in wireless networks. 
Distributed algorithms often assume that all links behave
according to the given algorithm. In contrast, realistic capacity
maximization problems are subject to possibly adversarial interference conditions. This can
be due to other systems operating with different algorithms in the same
frequency band or even maliciously behaving wireless transmitters.

In this paper, we address this issue and extend capacity maximization to this
scenario by studying distributed learning algorithms with adversarial jamming.
Links iteratively adapt their behavior to maximize the capacity of the single
time steps. We consider a very powerful adversary model of a
$(T',1-\delta)$-bounded jammer~\cite{AwerbuchRS08}.
Such an adversary is allowed to make all transmissions unsuccessful during a
$(1-\delta)$-fraction of any time window of $T'$ time steps. In addition, beyond
such a worst-case scenario, we also address a stochastic jammer that blocks each
time step
independently at random with a probability of $(1-\delta)$.

We assume that links have no prior knowledge about the size or structure of the
network. Giving such information to links can be infeasible when considering,
e.g., distributed large scale sensor networks or ad-hoc
networks. The only feedback they obtain is whether previous transmissions were
successful or not. Links must adjust their behavior over time and decide about transmission attempts given only the previous feedback. Our algorithms are based
on no-regret learning techniques to exploit the non-jammed time steps as
efficiently as possible. A no-regret learning algorithm is an iterative
randomized procedure that repeatedly decides which of multiple possible actions
to take. After choosing an action, the algorithm receives a utility as feedback
for its choice. Based on this feedback, it adjusts its internal probability
distribution over choices, thereby obtaining a ``no-regret'' property over time.
Each link can run such an algorithm independently of other links -- even without
knowing the number of links or the network structure. Our analysis shows how one
can use such algorithms and their no-regret property to obtain provable
approximation factors for capacity maximization under adversarial jamming. This
can even be achieved without knowing the bound on the jammer (i.e., $T'$ and
$\delta$).

In addition, we extend our results to a incorporate natural aspects that have not
been subject to worst-case analysis in the literature so far, even without 
adversarial jamming. First, we consider links that join and leave over time, 
where each link stays for a period until it has obtained a small regret. 
%
% In addition, we extend our results to several natural variants of the model
% which have not been treated in the literature so far, even without adversarial
% jamming. First, we consider links that join and leave over time, where each link
% stays for a period until he has obtained a small regret. 
%
Second, we consider systems where a link consists of a single sender and multiple 
receivers. In the multi-receiver case, we show that our algorithms can handle 
several alternative definitions of ``successful transmission''.

\subsection{Contribution}
We show that an adaptation of no-regret learning algorithms obtains a
constant-factor approximation of the maximum possible number of successful
transmissions if the adversary jams exactly a
$(1-\delta)$-fraction of the time. If the adversary jams less time steps, our
algorithms still guarantee an $O(1/\delta)$-approximation.
While our algorithms need to know the parameters $T'$ and $\delta$ of the
adversary, they are oblivious to the number $n$ of links and the exact topology
of the network. More generally, we can even obtain the similar results if $T'$
or $\delta$ is unknown. Based on these results, we show that for a stochastic 
jammer, the same results hold with high probability after a polynomial number 
of time steps.

Our results are obtained using a novel proof template based on
linear programming that significantly generalizes previous approaches for online learning in
wireless networks. We identify and base our approach on several key parameters
of the sequences of transmission attempts resulting from our algorithms. We then
show how to adjust no-regret learning algorithms to compute such sequences with
suitable values for the key parameters. This approach turns out to be very
flexible. Besides adversarial and stochastic jamming, we can successfully
address even further generalizations of the scenario with little overhead.

For example, we consider a scenario where links can join and leave the network
which introduces additional difficulties for the algorithms to adjust their
behavior to the network.
In this case, our approximation guarantee increases only by a factor of $O(\log
n)$. By applying our analysis directly with the proof template, we can easily
combine this with all results on adversarial jamming above if links remain in
the network sufficiently long to guarantee the properties necessary for applying our
template. The template can also be applied to scenarios where a
``link'' consists of a single sender and multiple receivers. 
We obtain the same results as before when a successful transmission means that
for a sender (a) at least one or (b) all receivers are conflict-free 
(i.e., receive the respective transmission successfully).
In contrast, if the objective
function linearly depends on the number of conflict-free receivers, it is
impossible to guarantee any sublinear approximation factor without additional
feedback.
% Details on this
% issue can be found in Appendix~\ref{section:apdx_multiple}. Due to spatial
% constraints most proofs can be found in the appendix.
% Details on this
% issue and most proofs can be found in the full version of this paper due to
% spatial constraints.

Our results are supported by simulations showing the general behavior predicted
by our theoretical analyses. The simulation results are very promising
especially as they show that the constants in our analysis are neglectable.

% For example, we consider a scenario where links can join and leave the network.
% In this case, our approximation guarantee increases only by a factor of $O(\log
% n)$. By applying our analysis directly with the proof template, we can easily
% combine this with all results on adversarial jamming above. Note, however, that
% to apply the template the learning algorithms require links to remain in the
% network sufficiently long to guarantee the properties necessary for applying our
% template. As a second example, we consider a scenario where a ``link'' consists
% of a single sender and multiple receivers. We obtain the same results as for
% adversarial and stochastic jamming when a successful transmission means that for
% a sender (a) at least one or (b) all receivers are conflict-free (i.e., receive
% the respective transmission successfully). In contrast, if the objective
% function linearly depends on the number of conflict-free receivers, it is
% impossible to guarantee any sublinear approximation factor without additional
% feedback.

The rest of this paper is structured as follows. After reviewing some related
work in the next section, we present a formal description of the network model,
the adversaries, and no-regret learning algorithms in Section~\ref{section:def}.
We define the key parameters in Section~\ref{sect:general} and prove
our general theorem using the template. In Section~\ref{section:bounded} we
present the application of no-regret learning algorithms for
$(T',1-\delta)$-bounded adversaries and apply the template. In addition, we
extend the analysis to the case when some parameter of the adversary is unknown.
In Section~\ref{section:stoch} we consider the stochastic adversary. Section~\ref{section:joinleave}
is devoted to the extensions for joining and leaving links and multiple receivers.
Finally, in Section~\ref{section:simulate} we present our simulation results.

\subsection{Related Work}
Capacity maximization has been a central algorithmic research topic over the
last decade. Many papers consider graph-based interference models, mainly
restricted to simple models like disk graphs~\cite{ErlebachJS05, NiebergHK08,
SchneiderW10}. This neglects some of the main characteristics of wireless
networks, and recently the focus has shifted to more realistic settings. Most
prominently, Moscibroda and Wattenhofer~\cite{MoscibrodaW06}
popularized models based on the signal-to-interference-plus-noise-ratio~(SINR).
 
Our work is closely related to results on learning and capacity maximization in
the SINR model with uniform powers (see
e.g.~\cite{AndrewsD09,Dinitz10,GoussevskaiaHWW09,OgiermanRSSZ13}). In fact, we consider
a more general scenario including a variety of interference models that satisfy
a property called $C$-independence, which is similarly used
in~\cite{AsgeirssonM11}.

The effect of jammers on wireless networks was studied
in~\cite{RichaSSZ10,RichaSSZ11,AwerbuchRS08,RichaSSZ12}. These works focus on the
simpler graph-based interference models. A recent approach by Ogierman et
al.~\cite{OgiermanRSSZ13} specializes in the SINR model with jammers. In contrast to
our work for a general class of interference models, this work targets the SINR
model rather specifically -- the adversary has a budget of power to influence
ambient noise. The network model also differs. It is not link-centered, i.e., it
consists of single nodes able to transmit and to receive messages from all other
nodes. A successful reception at any receiver is counted as such no matter from
which sender it comes. This is analyzed in terms of competitiveness of the
algorithm and yields that a constant fraction of time steps left free by the
adversary is used successfully under certain conditions.

While we obtain a similar approximation ratio for a link-centered scenario, we
are able to extend it in various directions. The regret-learning techniques allow a very
distributed approach with little feedback.
We do not assume that a specific algorithm is used but instead rely on the
(external) no-regret property of existing algorithms yielding some key
properties to apply our proof template. All algorithms that satisfy these
conditions are suitable for application within our framework
(e.g., Randomized Weighted Majority~\cite{LittlestoneW94}).

In a recent paper~\cite{DamsHK13}, we study no-regret learning algorithms
for multiple channels. An adversary draws stochastic availabilities that are
presented to the links in the beginning of each round and links have to decide
on which channel to transmit or not to transmit at all. Having multiple
channels and knowing which channels are available before deciding
whether to transmit gives the problem a quite different flavor. 
While there are similarities in the analysis, we apply more intricate 
no-sleeping-expert regret algorithms, which are beyond the scope of this paper.

\section{Formal Problem Description}\label{section:def}
\paragraph{Network Model and Adversary}
We consider the network consisting of a set $V$
of $n$ wireless links $\ell_v = (s_v, r_v)$ for $v\in V$ composed of sender
$s_v$ and receiver $r_v$. We assume the time steps to be synchronized and all
links to use the same channel, i.e., all transmission attempts increase the
interference for each other.
An adversary is able to jam a restricted number of time slots.
The overall goal in \emph{capacity maximization} is to maximize the total number
of transmission over time. Whenever some link $v \in V$ transmits successfully 
in some time step, this counts as one successful transmission. Success is defined 
using an interference model as specified below. We aim to maximize the sum of 
successful transmissions over all links and all time steps. With full knowledge of
the jammer, an optimum solution is constructed by picking in each time step a 
set of non-jammed links $V' \subseteq V$ with maximum cardinality such that their 
transmissions are simultaneously successful. Obviously, this approach requires 
global knowledge, centralized control, and is known to be NP-hard. Instead, we 
design distributed learning algorithms that provably approximate the optimum number 
of successful transmissions.

Similarly as in previous work \cite{RichaSSZ10, RichaSSZ11, AwerbuchRS08} we assume 
there is an adversary that can render transmission attempts unsuccessful. The
jammer is prevented from blocking all time steps and making communication
impossible as follows.
\begin{itemize}
\item A \emph{(global) $(T', 1-\delta)$-bounded} adversary can jam at most a
$(1-\delta)$-fraction of the time steps in any time window of length $T'$ or larger.
\item We will also consider the special case of an \emph{(global) $(T',
1-\delta)$-exact} adversary, which exactly jams an $(1-\delta)$-fraction of any time window of length $T'$.
\item As a third variant, we treat a \emph{(global) stochastic} adversary, where
we assume any time step to be independently jammed with a probability $1-\delta$.
\item Whereas these adversaries jam the channel globally for all links, an \emph{individual} 
adversary can block each link individually. This leads to similar definitions of \emph{individual $(T', 1-\delta)$-bounded}, 
\emph{individual $(T', 1-\delta)$-exact} and \emph{individual stochastic} adversaries. They obey 
the same restrictions on the type and number of jammed time slots for each link, but decide 
individually for each link if a slot is jammed. Note that the random trials
of the individual stochastic adversary can be correlated between links but
are assumed to be independent between time steps.
\end{itemize}
When the (individual) adversary jams a time slot, every attempted transmission
(of the jammed link) in this time slot becomes unsuccessful. Links receive as
information only success or failure of their own transmissions, i.e., they cannot distinguish
whether a transmission failed due to adversarial jamming or interference from
other transmissions. Thus, a protocol has to base the decisions about
transmission only on the feedback of success or failure of previous time steps.
The optimum differs in different time steps due to jamming and we will
consider the average optimum for comparison later.

\paragraph{Interference Model}
There are various definitions of successful transmissions based on the
underlying interference model, such as, e.g., the recently popular SINR model.

Formally, in the SINR model a link transmits successfully when the power of the
transmission from its sender at its
receiver is at least a factor $\beta$ stronger than the summed received power
from other links and ambient noise.
A link $\ell_v$ transmitting with power $\phi_v$ and distance $d_{v,v}$ between
its sender and receiver is successful iff
\begin{equation}
\label{eq:SINR}
\frac{\phi_v}{d_{v,v}^\alpha} \geq \beta \cdot \left(\sum_{u\neq v}
\frac{\phi_u}{d_{u,v}^\alpha} + \nu \right) \enspace,
\end{equation}
where $\phi_u$ is the power used by an other link $u\in V$, $d_{u,v}$ is the
distance from the sender of link $\ell_u$ to the receiver of link $\ell_v$, $\nu$ is the
ambient noise, $\alpha$ is the constant path-loss exponent and $\beta$ is the
SINR threshold.

We here use a more general framework that encompasses a variety of interference
models, including the SINR model or models based on bounded-independence graphs
like unit-disk graphs~\cite{SchneiderW10}.

Specifically, we model interference using edge-weighted conflict graphs. A
\emph{conflict graph} is a directed graph $G=(V,E)$ consisting of the links as
vertices and weights $b_v(w)$ for any edge $(v,w)\in E$. Given a subset $L$ of
links transmitting, we say that $\ell_w \in L$ is \emph{successful} iff
$\sum_{v\in L} b_v(w) \leq 1$
 (i.e., the sum of incoming edge weights from other transmitting
links is bounded by $1$). 
Such a set of links is \emph{feasible} if all links in
this set can transmit simultaneously.
We use the notion of $C$-independence as one key parameter for the connection between interference model and performance of
the algorithm.

\begin{definition}[cf. \cite{AsgeirssonM11}]\label{def:Asge}
A conflict graph is called \emph{$C$-independent} if for any feasible set $L$
and there exists a subset $L'\subset L$ with $|L'| =
\Omega\left( |L|\right)$ and $\sum_{v\in L'} b_u(v)\leq C$ for all $u\in V$, where $|L|$ and
$|L'|$ denote the number of transmitting links in these sets.
\end{definition}

$C$-independence generalizes the bounded-independence property popular in the
distributed computing literature. It has been observed, e.g.,
in~\cite{HoeferKV11,DamsHK13} that successful transmissions in the SINR
model can easily be represented by this framework using edge weights based on
the notion of affectance~\cite{HalldorssonW09}. We can straightforward set the
weights of the conflict graph $b_u(v)=a(u,v)$ as defined below.
\begin{definition}
The \emph{affectance} $a(w,v)$ of link $\ell_v$ caused by another link $\ell_w$ is
\[
a(w,v) = \min\left\{ 1, \beta
\frac{\frac{\phi_w}{d_{w,v}^\alpha}}{\frac{\phi_v}{d_{v,v}^\alpha} - \beta \nu}
\right\} \enspace .
\]
\end{definition}
With this conflict graph for the SINR model, the condition to be successful
becomes~\eqref{eq:SINR} of the SINR model. Thus, a transmission is successful iff the
signal-to-interference-plus-noise ratio is above the threshold $\beta$.

If the gain matrix in the SINR model is based on metric distances and we use
uniform power for transmission, this results in a $C$-independent conflict graph
with constant $C = O(1)$ (cf.~\cite[Lemma 11]{AsgeirssonM11}). 
% 
% Note that, besides
% capturing the SINR model, our approach of conflict graphs captures a wider variety of interference models.
% %

While we assume such a constant $C$-independence for simplicity, our results can
be generalized in a straightforward way to arbitrary conflict graphs losing a 
factor of $C$ in the approximation guarantee.
% While we assume such a constant $C$-independence for simplicity throughout our
% proofs in this paper, it is rather straightforward to observe that all results
% can be generalized to arbitrary conflict graphs losing a factor of $C$ in the
% approximation guarantee.

\paragraph{No-regret Learning}
Our algorithms for capacity maximization are based on no-regret learning. Links
decide independently in every time slot whether to transmit or not using an
appropriate learning algorithm. The algorithms adjust their behavior based on
the outcome of previous decisions. This outcome is either a successful
transmission or an unsuccessful one. The quality of an outcome is measured by a
suitable utility function $u_i^{(t)}(a_i^{(t)})$ depending on action $a_i^{(t)}$
chosen by player $i$ in time step $t$ and depending on actions chosen by other
players in $t$.
%
%Basically we use the same utility function as already used in previous
%work~\cite{} considering regret learning for scheduling wireless transmissions. Thus we give
%a utility of $+1$ for a successful transmission and $-1$ for an unsuccessful one
%while not sending at all yields a utility of $0$. 
%\[
%u_i^{(t)}(s_i,s_{-i}) = \begin{cases} 
%1 & \text{if player $i$ transmits successfully}\\
%-1 & \text{if player $i$ transmits and the transmission
%fails}\\
%0 & \text{otherwise.}
%\end{cases}
%\]
%
%Nevertheless we need to
%adjust this utility or the learning algorithm to compensate for the jammer. This
%adjustment will be different in different settings. The basic idea here is, that
%some time steps are simply ignored when learning. In the case when $T'$ and
%$\delta$ are known this can be done in a quite straight forward fashion,
%while if $T'$ is unknown the utility function needs to be adjusted.
%

In our case, there are only two possible actions in each time step -- sending or
not sending. We use utility functions $u_i^{(t)}$ defined in the subsequent
sections that strike a balance between interference minimization and throughput
maximization, where we also account for different forms of adversarial jamming.
Given this setup with appropriate utility functions, we assume links apply
arbitrary no-regret learning algorithms that minimize external regret. The
\emph{(external) regret} for an algorithm or a sequence of chosen actions is
defined as follows. 

\begin{definition}\label{def:regret}
Let $a_i^{(1)},\ldots,a_i^{(T)}$ be a sequence of action vectors. The \emph{external regret} of this sequence for link $i$ is defined by
% \[
% \max_{a_i' \in \mathcal{A}} \sum_{t=1}^T u_i^{(t)}(a_i') - \sum_{t=1}^T u_i^{(t)}(a_i^{(t)}) \enspace ,
% \]
\[
\max_{a_i' \in \mathcal{A}} \sum_{t=1}^T u_i^{(t)}(a_i') - \sum_{t=1}^T u_i^{(t)}(a_i^{(t)}) \enspace ,
\]
where $\mathcal{A}$ denotes the set of actions. An algorithm has the \emph{no-external regret} property if the external regret of the computed sequence of actions grows in $o(T)$.
\end{definition}
Algorithms like the
famous and surprisingly simple Randomized Weighted Majority algorithm by
Littlestone and Warmuth~\cite{LittlestoneW94} yield this no-regret
property by updating a probability distribution over the actions without
actually calculating the regret. 

\begin{algorithm}
Initialize weights $w_a = 1$ for all actions $a\in\mathcal{A}$;\\

\ForEach{$t \in T$}{
	$W=\sum_{a\in\mathcal{A}} w_a$;\\
	Choose action $a\in\mathcal{A}$ with probability $p_a=\frac{w_a}{W}$;\\
	\ForEach{$a\in\mathcal{A}$}{
		Observe loss $l^t(a)$;\\
		$w_a=w_a\cdot(1-\eta)^{l^t(a)}$;
	}
}
\caption{Randomized Weighted Majority~\cite{LittlestoneW94}}
\end{algorithm}
The way the algorithm is stated here uses the
notion of loss $l^t(a)\in[0,1]$ for an action $a\in \mathcal{A}$ in time step
$t\in T$, where $\mathcal{A}$ is the set of all available actions and $\eta\in
[0,\frac{1}{2})$ is a suitable chosen parameter. The loss can easily be
constructed from a utility by multiplying with $-1$ and scaling appropriatly.

\section{General Approach}\label{sect:general}
In this section, we present a general template to analyze capacity maximization
algorithms with adversarial jamming. Our approach here unifies and extends
previous analyses of simpler problem variants. We adapt no-regret learning
algorithms by defining appropriate utility functions and altering the
number of time steps between learning (i.e., updating the probabilities).
This way we achieve that certain key properties discussed below, on which our
analysis relies, hold.
A central idea in our construction is to divide time into phases. Here,
a \emph{phase} refers to a consecutive interval of $k$ time steps (where $k$
will be chosen appropriately in the respective settings). Our algorithms are assumed to decide about an
action at the beginning of each phase. A link will
either transmit in every time step or not at all during a phase. This way, we adapt
no-regret learning algorithms such that one round (update step) of the algorithm
coincides with a phase and not with a single time step. Note that in general we do
not assume the phases of different links to be synchronized. We denote by $\mathcal{R}_v$ the set of phases for link $\ell_v$.

A phase is considered to be either successful or unsuccessful. It is successful
if link $\ell_v$ attempted transmission throughout the phase and a fraction $\mu
\in (0,1]$ of time steps within the phase have been successful. We use $\mu$ as
a parameter to address specific settings below. For a computed sequence of
actions, let $q_v$ denote the fraction of phases in which $\ell_v$ attempted
transmission and $w_v$ the fraction of successful phases.

As the first step, we identify a relation between attempted and successful
transmissions. This and the property later on are useful for our analysis and
capture the intuition of a good approximation algorithm. Being $(\gamma,\epsilon)$-successful implies that a certain
fraction of phases with attempted transmissions in a computed sequence of
actions must be successful. It roughly states that an $(2/\gamma)$-fraction of
all transmission attempts is successful. Otherwise the algorithm would have
decided not to transmit. In subsequent sections, we will see that the
no-regret property can be used to yield this property.
Our proofs rely on parameter $\epsilon$, which denotes the regret averaged over
the phases.
\begin{definition}\label{def:success}
A sequence of action vectors is \emph{$(\gamma,\epsilon)$-successful} if 
$$\frac{1}{\gamma}\cdot(2w_v + \epsilon) \ge q_v.$$ %\enspace
\end{definition}
The attempted transmissions allow to obtain a bound on the incoming edge weights
from other transmitting links. Mirroring the
$(\gamma,\epsilon)$-successfulness, intuitively an algorithm sending
seldomly would have only done so if it would have been unsuccessful. Every link
that rarely attempts transmission must have experienced a lot of interference. Otherwise it would have been able to transmit successfully more often. To model this property, $f_v$ in the following
definition is the fraction of unsuccessful phases not restricted to those phases in which
$\ell_v$ transmits.
\begin{definition} \label{def:blocking}
A sequence of action vectors is \emph{$\eta$-blocking} if for every link with $q_v \le \frac 14
\eta$ we have for the fraction of unsuccessful phases due to other links
(independent of whether $\ell_v$ transmits) $f_v$
\[ 
f_v\geq \frac{1}{4}\eta \enspace \text{ and } \enspace 
\sum_{u \in V} b_u(v) q_u \ge \frac 18\eta \enspace.
\]
\end{definition}

Given these conditions, we can obtain a bound on the performance of the
algorithm for capacity maximization.

\begin{theorem}\label{theo:generic}
Suppose an algorithm computes a sequence of actions which is $\eta$-blocking and $(\gamma,\epsilon)$-successful with $\epsilon < \frac{1}{4n} \gamma \eta$. Against an (individual) $(T',1-\delta)$-bounded adversary the average throughput of the computed action sequence yields an approximation factor of 
\[ O\left(\frac{C}{\mu \cdot \gamma \cdot \eta}\right) \enspace . \]
\end{theorem}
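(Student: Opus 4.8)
The plan is to combine a lower bound on the algorithm's throughput with an LP-duality upper bound on the optimum, bridging the two through the quantities $w_v$ and $q_v$. For the lower bound, observe that every phase counted toward $w_v$ contributes at least $\mu k$ successful transmissions, where $k$ is the phase length; since each link runs for $k|\mathcal{R}_v|=T$ time steps, the total throughput satisfies $\mathrm{ALG}\ge \sum_{v\in V}\mu k\, w_v|\mathcal{R}_v|=\mu T\sum_{v\in V} w_v$, so the average throughput is at least $\mu\sum_v w_v$. It therefore suffices to prove that the average optimum is $O\!\left(\tfrac{C}{\gamma\eta}\right)\sum_v w_v$.

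For the upper bound, I would first use $C$-independence to pass, at each time step $t$, from the optimal feasible set $S_t$ to a subset $L_t'\subseteq S_t$ with $|L_t'|=\Omega(|S_t|)$ and $\sum_{v\in L_t'}b_u(v)\le C$ for all $u$. Losing only a constant factor, it then suffices to bound, for each $t$, the value of the relaxed packing LP
\[
\max \sum_{v} x_v \quad\text{s.t.}\quad \sum_{v} b_u(v)\,x_v\le C\ \ \forall u\in V,\quad 0\le x_v\le 1 ,
\]
since the indicator vector of $L_t'$ is feasible for it. Its dual is $\min C\sum_u y_u+\sum_v\lambda_v$ subject to $\sum_u b_u(v)y_u+\lambda_v\ge 1$ for all $v$, with $y,\lambda\ge 0$, and any feasible dual solution upper-bounds every such $L_t'$ simultaneously (the same solution works for all $t$, and dropping the jamming constraints only enlarges the primal).

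The heart of the argument is an explicit dual solution built from the two structural properties. Split the links into \emph{frequent} ones ($q_v>\tfrac14\eta$) and \emph{rare} ones ($q_v\le\tfrac14\eta$), and set $y_u=\tfrac{8}{\eta}q_u$ for all $u$, $\lambda_v=1$ for frequent $v$, and $\lambda_v=0$ otherwise. For a rare link $v$ the $\eta$-blocking property gives $\sum_u b_u(v)q_u\ge\tfrac18\eta$, hence $\sum_u b_u(v)y_u\ge 1$; for a frequent link the constraint holds because $\lambda_v=1$. Thus the solution is dual-feasible. To bound its objective I use $(\gamma,\epsilon)$-successfulness ($\gamma q_v\le 2w_v+\epsilon$) together with $\epsilon<\tfrac{1}{4n}\gamma\eta$. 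The first term gives $C\sum_u y_u=\tfrac{8C}{\eta}\sum_u q_u\le\tfrac{16C}{\gamma\eta}\sum_u w_u+\tfrac{8C}{\gamma\eta}n\epsilon=O\!\left(\tfrac{C}{\gamma\eta}\right)\sum_u w_u+O(C)$. For the second term, each frequent link satisfies $2w_v\ge\gamma q_v-\epsilon>\tfrac14\gamma\eta-\epsilon=\Omega(\gamma\eta)$, so $w_v=\Omega(\gamma\eta)$ and the number of frequent links is at most $O\!\left(\tfrac{1}{\gamma\eta}\right)\sum_v w_v$. Combining, the dual objective, and hence the average optimum, is $O\!\left(\tfrac{C}{\gamma\eta}\right)\sum_v w_v$, and dividing by the throughput bound $\mu\sum_v w_v$ yields the claimed factor $O\!\left(\tfrac{C}{\mu\gamma\eta}\right)$.

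The main obstacle is the dual construction and the verification that a single, time-independent price vector simultaneously certifies the optimum in every non-jammed time step; this is precisely where the two properties must be dovetailed, with $\eta$-blocking covering the rare links through the prices $y_u\propto q_u$ and $(\gamma,\epsilon)$-successfulness covering the frequent links through the slacks $\lambda_v$. A secondary technical point is the additive $O(C)$ arising from the $n\epsilon$ terms: the hypothesis $\epsilon<\tfrac1{4n}\gamma\eta$ is exactly what keeps these negligible relative to $\sum_v w_v$, but one must still argue (or assume $\mathrm{OPT}$ nontrivial) that this additive slack is absorbed into the multiplicative guarantee.
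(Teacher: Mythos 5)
Your proposal is correct and follows essentially the same primal--dual route as the paper: the same packing LP with right-hand side $C$, the same dual prices $y_u\propto q_u/\eta$ certified by $\eta$-blocking for rare links, the same use of $(\gamma,\epsilon)$-successfulness to convert $q_v$ into $w_v$, and the same factor $\mu$ linking successful phases to successful time steps. The only cosmetic deviation is setting $\lambda_v\in\{0,1\}$ for frequent links (forcing a separate count of them) where the paper uses $z_v=\tfrac{4}{\eta}q_v$, and the additive slack you flag at the end is present, and equally unaddressed, in the paper's own argument.
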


\begin{proof}
We will prove the theorem using a primal-dual approach. The following primal linear program
corresponds to the optimal scheduling (c.f.~\cite{HalldorssonM12}).
\[
\begin{array}[4]{rrll}
\text{Max.} & \multicolumn{3}{l}{\D \sum_{v\in V} x_{v}\vspace{0.1cm}}\\

\text{s.t.} & \D \sum_{v\in V} b_u(v)x_{v} & \leq C & \forall u\in
V\vspace{0.05cm}\\
& x_{v} & \leq 1 & \forall v\in V\vspace{0.05cm}\\
 & x_{v} & \geq 0 & \forall v\in V
\end{array}
\]
Let $OPT'$ denote the set $L'$ for $L=OPT$ from the definition of $C$-independence.
For a global adversary we can choose
$x_v$ to correspond to the single slot optimum without jammer by setting $x_v=1$ if link
$\ell_v$ is transmitting in $OPT'$ and $x_v=0$ otherwise. Due to
$C$-independence, this solution is feasible.

Let $T$ be the set of all time steps. For an
individual $(T',1-\delta)$-bounded adversary, different time steps yield
different optima denoted by $OPT'_{t}$. Therefore, we define
$x_v=\frac{|\left\{t \in T \growingmid \ell_v \in OPT'_{t} \right\} |}{| T |}$
as the fraction of time steps in which $\ell_v$ is in the optimum of all time
steps. As every single $OPT'_t$ is $C$-independent, this average is also
$C$-independent. This yields a feasible solution for the LP.

By primal-dual arguments we bound the value of the primal optimum.
\[
\begin{array}[4]{rrll}
\text{Min.}  & \multicolumn{3}{l}{ \D \sum_{v\in V} C \cdot y_{v} +
\sum_{v \in V} z_{v}\vspace{0.1cm}}\\

\text{s.t.} & \D \sum_{u\in V} b_u(v)y_{u} + z_{v} & \geq  1 & \forall v\in
V\vspace{0.05cm}\\
 & y_{v}, z_{v} & \geq 0&\forall v\in V
\end{array}
\]
To construct a feasible solution for the dual LP we set $y_v = \frac{1}{\eta}
\cdot 8 q_v$ and $z_v = \frac{1}{\eta} \cdot 4 q_v$.
%This solution fulfills the constraints as necessary.
 If $q_v \geq \frac{1}{4}
\eta$, this directly fulfills the constraints due to $z_v\geq 1$. Otherwise, by
Definition~\ref{def:blocking} it holds that the interference from
other links over all phases (including phases in which $\ell_v$ does not send)
is at least $\frac{1}{8} \eta$.
This yields $\sum_{u\in V} b_u(v) q_u \geq \frac{1}{8} \eta$ and plugging in fulfills the constraints.
%Plugging in shows that the constraint is fulfilled.

Considering the objective functions and using
Definition~\ref{def:success} we get
\[
\sum_{v\in V}
\frac{|\left\{t\in T \growingmid \ell_v \in OPT'_{t} \right\} |}{T} \leq 
\sum_{v\in V} C \cdot \frac{12}{\eta}\cdot
\frac{1}{\gamma} \left( w_v + \epsilon \right) \enspace .
\]
Remember that a phase is of length $k$. As a successful phase has link $\ell_v$
being successful in at least $\mu k$ time steps, we can conclude that $w_v$ and the total number of successful steps are related by a factor of $\mu$. 
This yields an approximation factor of $O( C/(\eta \gamma \mu) )$ for
$\epsilon<\frac{1}{4n}\eta \gamma$ with respect to the primal optimum.
\end{proof}
Using this result, we obtain
the following corollary. Note that for an $(T', 1-\delta)$-exact adversary
for all $T'\leq T$, where $T$ is the length of the sequence of actions, the
average optimum is in fact a factor $\delta$ worse than the single-slot optimum without adversary. As mentioned in the proof above, the approximation guarantee also holds with respect to the single-slot optimum improving the
approximation guarantee for global exact jammers by a factor of
$1/\delta$.
\begin{corollary}\label{cor:generic_exact}
Suppose an algorithm computes a sequence of actions of length $T$ which is
$\eta$-blocking and $(\gamma,\epsilon)$-successful with $\epsilon < \frac{1}{4n} \gamma \eta$.
Against any global $(T',1-\delta)$-exact adversary with $T'\leq T$, the average
throughput of the computed action sequence yields an approximation factor of \[ O\left(\frac{C\cdot \delta}{\mu \cdot \gamma \cdot \eta}\right) \enspace . \]
\end{corollary}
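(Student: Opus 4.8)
The plan is to derive Corollary~\ref{cor:generic_exact} directly from Theorem~\ref{theo:generic} by relating the two notions of optimum. The theorem already establishes, under the stated $\eta$-blocking and $(\gamma,\epsilon)$-successful hypotheses, an $O\left(C/(\mu\gamma\eta)\right)$-approximation against an $(T',1-\delta)$-bounded adversary, where the comparison is to the \emph{average} optimum over all time steps. So the only thing left to do is to quantify how much the average optimum loses against the single-slot optimum in the special case of an \emph{exact} global jammer, and then observe that the primal-dual bound from the theorem's proof in fact already bounds the single-slot optimum.

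Concretely, I would first note that a global $(T',1-\delta)$-exact adversary blocks \emph{exactly} a $(1-\delta)$-fraction of every window of length $T'$, and since $T'\le T$ the same fraction holds over the whole horizon. Hence in a $(1-\delta)$-fraction of the time steps \emph{every} transmission fails, so $OPT'_t=\emptyset$ there, while in the remaining $\delta$-fraction the per-slot optimum can be as large as the single-slot optimum $OPT'$ without any jammer. Therefore the average optimum $\frac{1}{|T|}\sum_{t\in T}|OPT'_t|$ is at most $\delta\cdot|OPT'|$. Equivalently, for a global adversary all unjammed slots share the same feasible single-slot optimum, so the average-optimum LP value is exactly a factor $\delta$ below the single-slot-optimum LP value.

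The key second observation — and the one the corollary's preceding remark flags — is that the dual-feasible solution $y_v=\tfrac{8}{\eta}q_v$, $z_v=\tfrac{4}{\eta}q_v$ constructed in the proof of Theorem~\ref{theo:generic} does not depend on the jammer at all; it is built purely from the algorithm's attempt frequencies $q_v$. By weak LP duality its objective value upper-bounds \emph{any} feasible primal value, in particular the single-slot optimum $\sum_v x_v$ with $x_v=\mathbf{1}[\ell_v\in OPT']$, which is primal-feasible by $C$-independence exactly as in the theorem. Combining with $(\gamma,\epsilon)$-successfulness and $\epsilon<\tfrac{1}{4n}\gamma\eta$ then yields throughput within an $O\left(C/(\mu\gamma\eta)\right)$ factor of the single-slot optimum. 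Multiplying through by the relation average-optimum $=\delta\cdot$ single-slot-optimum converts this into an $O\left(C\delta/(\mu\gamma\eta)\right)$-approximation with respect to the average optimum, which is the benchmark used throughout; this is the claimed bound.

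I do not expect a serious obstacle here, since the heavy lifting is entirely in Theorem~\ref{theo:generic} and the corollary is essentially a change of benchmark. The one point requiring care is making the two optima and their ratio precise: I must be sure the ``exactly $(1-\delta)$-fraction'' condition applies cleanly over the full horizon (which needs $T'\le T$, as assumed) so that no boundary window spoils the exact $\delta$ accounting, and I must check the direction of the inequality — we want that shrinking the benchmark by a factor $\delta$ genuinely tightens, rather than loosens, the guarantee, which it does because the same algorithmic throughput is now compared against a smaller optimum.
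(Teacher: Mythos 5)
Your proposal is correct and follows essentially the same route as the paper: the paper's own justification is precisely the remark that the primal--dual bound in Theorem~\ref{theo:generic} is already taken against the single-slot optimum for a global adversary, and that an exact jammer makes the average optimum a factor $\delta$ smaller, which converts the $O\left(C/(\mu\gamma\eta)\right)$ guarantee into the claimed $O\left(C\delta/(\mu\gamma\eta)\right)$ bound. Your additional care about the direction of the inequality and the $T'\le T$ boundary accounting is consistent with the paper's (briefer) treatment.
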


The definition of a global $(T',1-\delta)$-exact adversary implies
the sequence of jammed time steps to be cyclic repetitive with a period of $T'$.
Nevertheless, Corollary~\ref{cor:generic_exact} holds for all $T'\leq T$ and
thus does not imply this as $T'=T$ can be set here.
%
%In some of our settings Corollary~\ref{cor:generic_exact} implies a constant
% factor. For stochastic jammers in Section~\ref{section:stoch}, we show that they become sufficiently close to exact jammers and then allow to apply this corollary.

\section{Bounded Adversary}\label{section:bounded}
\subsection{$(T', 1-\delta)$-bounded Adversary}
\label{section:Interval}
In this section we construct no-regret algorithms that provide constant and $O(1/\delta)$-factors
approximation for diminishing regret against $(T',1-\delta)$-exact and bounded adversaries, resp.
Throughout this section we assume that the parameters $T'$ and $\delta$ are
known to the links and can be used by the algorithm. In later sections we will
relax this assumption. We will describe how to embed any no-regret learning
algorithm into our general approach from Section~\ref{sect:general}. In
particular, we define appropriate utility functions for feedback. Based on
these, the no-regret property implies suitable bounds for $\gamma$,
$\epsilon$ and $\eta$. 
Note that we can allow different links to use different
no-regret algorithms.
 
Each no-regret algorithm has two actions available (sending and not sending). We
set the length of a phase $k=T'$ and thus assume each algorithm sticks to a
chosen action for $T'$ time steps before changing its decision. We consider a
phase to be successful iff more than $\mu = \frac{1}{2}\delta$ time steps
throughout the phase are successful. After a phase the following utility function
inspired by~\cite{AndrewsD09} is used to give feedback to the no-regret
algorithms to adjust the sending probabilities. Let $w_u^{R}$ denote the fraction of successful
transmissions during phase $R$. Then the utility function applied after phase
$R$ penalizes an unsuccessful phase by $-1$ and rewards a successful phase by
$+1$:
\[
u_i^{(R)}(s_i,s_{-i}) = \begin{cases} 
1 & \text{if $i$ transmits and } w_u^{R} \geq \frac{1}{2} \delta\\
-1 & \text{if $i$ transmits and } w_u^{R} < \frac{1}{2} \delta\\
0 & \text{otherwise.}
\end{cases}
\]
A no-regret algorithm embedded this way will converge to an $O(1/\delta)$-approximation for both $\left(T',1-\delta\right)$-bounded and individually-$(T', 1-\delta)$-bounded adversaries.

\begin{theorem}\label{theo:approx_interval}
Every sequence of action vectors with average regret per phase of $\epsilon
\leq \frac{1}{4 n}$ for all links yields an
$O\left(1/\delta\right)$-approximation against individual $(T',
1-\delta)$-bounded adversaries.
% 
% There exists $k\in O(\log
% n\frac{1}{\delta})$ such that the same bound holds with high probability for a stochastic or individual stochastic jammer.
\end{theorem}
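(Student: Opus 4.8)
The plan is to instantiate the generic machinery of Theorem~\ref{theo:generic} by showing that, with the phase length $k=T'$, success threshold $\mu=\frac12\delta$, and the given $\pm 1$ utility function, the sequence produced by any no-regret algorithm with small average regret is both $(\gamma,\epsilon)$-successful and $\eta$-blocking for suitable constants $\gamma$ and $\eta$ that do not depend on $n$. Once those two properties are established with $\gamma,\eta=\Theta(1)$ and $\epsilon\le\frac{1}{4n}<\frac{1}{4n}\gamma\eta$, plugging into Theorem~\ref{theo:generic} immediately gives the approximation factor $O\bigl(\tfrac{C}{\mu\gamma\eta}\bigr)=O(1/\delta)$, since $C=O(1)$ and $\mu=\Theta(\delta)$. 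So the real content is verifying the two structural properties from the no-regret guarantee.

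For $(\gamma,\epsilon)$-successfulness I would argue directly from the definition of regret applied to the fixed action ``never transmit.'' That benchmark action earns utility $0$ in every phase, so its total utility is $0$. The algorithm's total utility is $(+1)$ times the number of successful phases plus $(-1)$ times the number of transmitting-but-unsuccessful phases, i.e. proportional to $w_v-(q_v-w_v)=2w_v-q_v$ as fractions. The no-regret bound says the algorithm's average utility is within $\epsilon$ of the best fixed action's average utility, hence within $\epsilon$ of $0$; rearranging $2w_v-q_v\ge -\epsilon$ yields exactly $q_v\le 2w_v+\epsilon$, which is the $(\gamma,\epsilon)$-successful inequality with $\gamma=1$. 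The ``average regret per phase $\le\frac{1}{4n}$'' hypothesis is precisely what supplies this $\epsilon$.

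For $\eta$-blocking I would reason about links that transmit rarely, say $q_v\le\frac14\eta$. The intuition, which I must turn into a bound, is that a link transmits seldom only because transmitting was usually penalized, i.e. usually unsuccessful. Again invoking no-regret against ``always transmit'': if the fraction $f_v$ of phases that are unsuccessful (due to other links, regardless of whether $\ell_v$ sends) were small, then ``always transmit'' would have earned close to $+1$ per phase on average, forcing the algorithm to transmit in nearly all phases and contradicting $q_v\le\frac14\eta$. This should pin down $f_v\ge\frac14\eta$ for an appropriate constant $\eta$. The second half, $\sum_{u\in V}b_u(v)q_u\ge\frac18\eta$, then follows because an unsuccessful phase at $v$ means the incoming interference $\sum_{u}b_u(v)$ from links transmitting in that phase exceeded $1$; averaging this over the $\ge\frac14\eta$ fraction of unsuccessful phases and relating the per-phase transmission indicators to the overall frequencies $q_u$ converts an interference-exceeds-one statement into the claimed weighted sum, losing the factor of two that accounts for the gap between ``$\ell_v$ unsuccessful'' and ``$\ell_v$ transmits and is unsuccessful.''

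I expect the main obstacle to be the $\eta$-blocking argument, specifically the passage from ``$f_v$ is a large fraction of unsuccessful phases'' to the weighted interference bound $\sum_u b_u(v)q_u\ge\frac18\eta$. Two subtleties need care here. First, $\mu=\frac12\delta$ means a phase counts as unsuccessful even if up to half of $\delta$-fraction of its slots succeed, so ``unsuccessful'' does not literally mean interference exceeded $1$ in every slot; I must use that in an unsuccessful non-jammed window the interference condition fails in more than a $(1-\frac12\delta)$-fraction of the $T'$ slots, and combine this with the adversary's $(1-\delta)$-bound to guarantee enough genuinely interference-blocked slots. Second, the edge weights $b_u(v)$ and the per-phase transmission patterns must be aggregated correctly across phases to recover the time-averaged frequencies $q_u$; this is where the phases of different links being unsynchronized could complicate a naive counting argument, and I would handle it by summing the failure condition over slots and re-indexing by the transmitting link $u$.
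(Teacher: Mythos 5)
Your proposal is correct and follows essentially the same route as the paper: it instantiates Theorem~\ref{theo:generic} with $\mu=\tfrac12\delta$ by deriving $(1,\epsilon)$-successfulness from the regret against ``never transmit'' and $1$-blocking from the regret against ``always transmit,'' exactly as in Lemma~\ref{lemma:w_interval}. The two subtleties you flag (converting phase-level failure into genuinely interference-blocked unjammed slots, and the factor-$2$ loss from unsynchronized phases overlapping) are precisely the points the paper's lemma handles.
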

 
We will establish the properties needed to apply Theorem~\ref{theo:generic} in
the following lemma.
%  It shows that the algorithm yields 
% $(1,\epsilon)$-successfulness, where $\epsilon$ is the regret averaged over all
% phases. We also establish the $1$-blocking property.
Using Theorem~\ref{theo:generic} these properties yield the claim.

% As each link either transmits or does not transmit throughout a whole phase, we
% consider $Q_v^R=1$ if link $\ell_v$ transmits in phase $R$ and $Q_v^R=0$
% otherwise. Similarly, we define $W_v^R=1$ if $w_u^{R} \geq \frac{1}{2}\delta$
% and $W_v^R=0$ otherwise. Recall that the fraction of phases with transmission
% attempts is $q_v=\frac{1}{| \mathcal{R}_v |}\sum_{R_v\in\mathcal{R}_v}
% Q_v^{R_v}$ and the fraction successful phases is $w_v=\frac{1}{| \mathcal{R}_v
% |}\sum_{R_v\in\mathcal{R}_v} W_v^{R_v}$.
%
\begin{lemma}[cf.~\cite{AsgeirssonM11,DamsHK12SPAA}]\label{lemma:w_interval} 
Every no-regret algorithm with average regret per phase $\epsilon<\frac 14$
using the utility above computes an action sequence that is $(1,\epsilon)$-successful and $1$-blocking.
\end{lemma}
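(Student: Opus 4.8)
The plan is to verify the two properties in turn, each time feeding the no-regret guarantee with a carefully chosen fixed comparison action. Write $|\mathcal{R}_v|$ for the number of phases of $\ell_v$, and recall that in each phase $\ell_v$ either transmits in all $k=T'$ steps or in none, so $q_v$ is simultaneously the fraction of transmitting phases and the fraction of transmitting time steps. The common engine is the utility bookkeeping: summed over all phases, the algorithm collects $+1$ on each successful transmitting phase and $-1$ on each unsuccessful transmitting phase, so its total utility equals $(2w_v-q_v)|\mathcal{R}_v|$.

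For $(1,\epsilon)$-successfulness I would compare against the fixed action ``never transmit'', whose total utility is $0$. The no-external-regret property then forces the algorithm's utility to be at least $-\epsilon|\mathcal{R}_v|$, i.e. $(2w_v-q_v)|\mathcal{R}_v|\ge-\epsilon|\mathcal{R}_v|$, which rearranges to $q_v\le 2w_v+\epsilon$. This is exactly Definition~\ref{def:success} with $\gamma=1$. For the first half of $1$-blocking I would instead compare against ``always transmit''. Writing $f_v=1-g_v$, where $g_v$ is the fraction of phases that would have been successful had $\ell_v$ transmitted (the other links' actions held fixed), this benchmark has total utility $(1-2f_v)|\mathcal{R}_v|$. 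The regret inequality gives $2w_v-q_v\ge 1-2f_v-\epsilon$, and combining the trivial bound $w_v\le q_v$ with the hypotheses $q_v\le\frac14$ and $\epsilon<\frac14$ yields $2f_v\ge 1-q_v-\epsilon\ge\frac12$, hence $f_v\ge\frac14$, as required.

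The real work, and the step I expect to be the main obstacle, is the second blocking condition: converting $f_v\ge\frac14$ into $\sum_{u}b_u(v)q_u\ge\frac18$. I would argue through the aggregate interference. Summing the per-step interference $\sum_{u\text{ transmitting at }t}b_u(v)$ over all $T$ time steps gives precisely $T\sum_u b_u(v)q_u$, since each $\ell_u$ transmits in a $q_u$-fraction of the steps. It then remains to lower bound this total by the contribution of the $f_v|\mathcal{R}_v|$ unsuccessful phases. Here the phase construction is indispensable: because every interfering link likewise holds its action across a whole phase, the set of transmitters — and thus the interference seen by $\ell_v$ — is essentially fixed within a phase, so a phase in which $\ell_v$ would fail is one in which the incoming interference exceeds $1$ for (essentially) all $k$ of its steps. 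Each such phase therefore contributes on the order of $k$ to the interference total; together with $|\mathcal{R}_v|\,k=T$ this produces $\sum_u b_u(v)q_u\ge c\,f_v\ge\frac18$ for a suitable constant $c$.

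Two points need careful handling in that last step, and they are where the choice $\mu=\frac12\delta$ earns its keep. First, jammed steps still carry interference and still contribute to the aggregate $T\sum_u b_u(v)q_u$, so the bound is not eroded by the jammer — the essential feature exploited is that a phase failure reflects sustained interference across the phase rather than a single unlucky step, which is what keeps the constant independent of $\delta$. Second, the $(T',1-\delta)$-bounded adversary leaves at least a $\delta$-fraction, i.e. at least $\delta k$, of the steps in any length-$T'$ window unjammed; since $\mu=\frac12\delta<\delta$, a phase with low interference has all of its $\ge\delta k$ unjammed steps successful and is therefore correctly counted as successful, so that an unsuccessful phase genuinely certifies high interference. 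Once this aggregate-interference bound is established, both parts of Definition~\ref{def:blocking} hold with $\eta=1$, and combined with $(1,\epsilon)$-successfulness the claim follows, after which Theorem~\ref{theo:generic} can be invoked.
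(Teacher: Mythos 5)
Your derivations of $(1,\epsilon)$-successfulness and of the first blocking inequality $f_v\ge\frac14$ are correct and essentially identical to the paper's: both compare against the fixed actions ``never transmit'' and ``always transmit'' and rearrange the regret bound. The gap is in the second blocking inequality. Your time-step aggregation $\sum_{t}\sum_{u\ \mathrm{tx\ at\ }t}b_u(v)=T\sum_u b_u(v)q_u$ is fine, but the lower bound you feed into it is not: an unsuccessful phase does \emph{not} certify interference above $1$ in ``essentially all $k$'' of its steps. By definition a phase fails when fewer than $\mu k=\frac{\delta}{2}k$ steps succeed, and the adversary may jam up to $(1-\delta)k$ of them, so the phase only certifies at least $\delta k-\frac{\delta}{2}k=\frac{\delta}{2}k$ steps with incoming weight exceeding $1$. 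Your supporting claim that the transmitter set is ``essentially fixed within a phase'' fails because the paper explicitly does not synchronize phases across links: each $\ell_u$ may switch once inside $\ell_v$'s phase, so the high-interference steps can be confined to a sub-window of length $\Theta(\delta k)$ (and this is tight). Consequently your accounting yields only $\sum_u b_u(v)q_u\ge\frac{\delta}{8}$, i.e.\ $\delta$-blocking rather than $1$-blocking in the second condition, which would degrade Theorem~\ref{theo:generic} to $O(1/\delta^2)$.

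The paper closes exactly this hole by charging at the level of phases rather than steps: in each unsuccessful phase $R$ of $\ell_v$ there is at least \emph{one} unjammed step where $\sum_{u\ \mathrm{tx}}b_u(v)>1$, and each interferer $u$ active at that step is committed to transmitting for its entire overlapping phase $R'$, so that single step already forces
\[
\sum_{u}\sum_{\substack{R'\in\mathcal{R}_u\\ R'\cap R\neq\emptyset}} b_u(v)\,Q_u^{R'}\ \ge\ 1 ,
\]
independently of $\delta$. Summing over the $\ge\frac14|\mathcal{R}_v|$ unsuccessful phases and swapping the order of summation, each phase of $\ell_u$ overlaps at most two phases of $\ell_v$, which is where the factor $\frac18$ (rather than $\frac14$) comes from and why the constant is $\delta$-free. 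You should replace your step-level lower bound with this phase-level charging argument; the rest of your proof then goes through.
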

\begin{proof}
As each link either transmits or does not transmit throughout a whole phase, we
consider $Q_v^R=1$ if link $\ell_v$ transmits in phase $R$ and $Q_v^R=0$
otherwise. Similarly, we define $W_v^R=1$ if $w_u^{R} \geq \frac{1}{2}\delta$
and $W_v^R=0$ otherwise. Recall that the fraction of phases with transmission
attempts is $q_v=\frac{1}{| \mathcal{R}_v |}\sum_{R_v\in\mathcal{R}_v}
Q_v^{R_v}$ and the fraction of successful phases is $w_v=\frac{1}{|
\mathcal{R}_v |}\sum_{R_v\in\mathcal{R}_v} W_v^{R_v}$.

Not sending for a link
$\ell_v$ yields a utility of $0$. Thus, we get at most a regret of
$(q_v-w_v)-w_v \leq \epsilon$. This implies $q_v \leq 2 w_v + \epsilon$, which
yields $(1,\epsilon)$-successfulness.

In this context being $1$-blocking is equivalent to the following statement.
For $\sum_{R\in\mathcal{R}_v} Q_v^{R} < \frac{1}{4} |\mathcal{R}_v|$ we have
$\sum_{R\in\mathcal{R}_v} F_v^{R} \geq \frac{1}{4}|\mathcal{R}_v|$ and
$\sum_{u\in V}\sum_{R\in\mathcal{R}_u} b_u(v) Q_u^{R} \geq \frac{1}{8}
|\mathcal{R}_v|$, where $\sum_{R\in\mathcal{R}_v} F_v^{R}$ denotes the number of
phases with $w_u^{R} < \frac{1}{2} \delta$ defined analogously to $W_v^R$.

Always sending would yield a utility of $(|\mathcal{R}_v|-
\sum_{R\in\mathcal{R}_v} F_v^{R})-\sum_{R\in\mathcal{R}_v} F_v^{R}$. The
no-regret sequence obtains a utility of at most $\sum_{R\in\mathcal{R}_v}
Q_v^{R}$. This yields as regret
\[
(|\mathcal{R}_v|- \sum_{R\in\mathcal{R}_v}
F_v^{R})-\sum_{R\in\mathcal{R}_v}
F_v^{R} - \sum_{R\in\mathcal{R}_v}
Q_v^{R} \leq \epsilon \cdot |\mathcal{R}_v|
\enspace .\]
For contradiction we will now assume that $\sum_{R\in\mathcal{R}_v} F_v^{R} < \frac{1}{4} |\mathcal{R}_v|$. Plugging in $\sum_{R\in\mathcal{R}_v} Q_v^{R} < \frac{1}{4} |\mathcal{R}_v|$ yields
%
% \begin{eqnarray*}
% & & (|\mathcal{R}_v|- \sum_{R\in\mathcal{R}_v}
% F_v^{R})-\sum_{R\in\mathcal{R}_v}
% F_v^{R} - \sum_{R\in\mathcal{R}_v}
% Q_v^{R} \leq \epsilon |\mathcal{R}_v|\\
% &\Leftrightarrow & \frac{3}{4}|\mathcal{R}_v|- \frac{1}{4}|\mathcal{R}_v| -
% \frac{1}{4}|\mathcal{R}_v| \leq \epsilon |\mathcal{R}_v|\\
% &\Leftrightarrow & \frac{1}{4} \leq \epsilon \enspace .
% \end{eqnarray*}
\begin{eqnarray*}
& & (|\mathcal{R}_v|- \sum_{R\in\mathcal{R}_v}
F_v^{R})-\sum_{R\in\mathcal{R}_v}
F_v^{R} - \sum_{R\in\mathcal{R}_v}
Q_v^{R} \leq \epsilon |\mathcal{R}_v|\\
&\Leftrightarrow & \frac{3}{4}|\mathcal{R}_v|- \frac{1}{4}|\mathcal{R}_v| -
\frac{1}{4}|\mathcal{R}_v| \leq \epsilon |\mathcal{R}_v| \enspace .
\end{eqnarray*}
For $\epsilon < \frac{1}{4}$, this is a contradiction and proves $\sum_{R\in\mathcal{R}_v}
F_v^{R} > \frac{1}{4}|\mathcal{R}_v|$.

In all these phases link $\ell_v$ needs to experience interference from other
links. Any link interfering with $\ell_v$ in one phase has to transmit also in a whole phase. Assuming that those phases need not to be synchronous with these of $\ell_v$, one phase of $\ell_v$ can nevertheless only overlap with two phases of the other link and vice versa. This leads to an additional loss of factor $2$ because it holds
\[ %begin{eqnarray*}
\sum_{R \in \mathcal{R}_v} \sum_{\substack{R' \in \mathcal{R}_u \\ R' \cap
R \neq \emptyset}} b_u(v) Q_u^{R'} \quad = \quad  \sum_{R' \in \mathcal{R}_u}
\sum_{\substack{R \in \mathcal{R}_v \\ R' \cap R \neq \emptyset}} b_u(v)
Q_u^{R'} \quad \leq \quad 2 \cdot \sum_{R' \in \mathcal{R}_u} b_u(v) Q_u^{R'}
\enspace .
\] %end{eqnarray*}
With $\sum_{R\in\mathcal{R}_v} F_v^{R} > \frac{1}{4}|\mathcal{R}_v|$ we get
\[
\sum_{R \in \mathcal{R}_v} \sum_{\substack{R' \in \mathcal{R}_u \\ R' \cap
R \neq \emptyset}} b_u(v) Q_u^{R'} > \frac{1}{4}|\mathcal{R}_v|
\enspace .
\]
This yields $\sum_{R' \in \mathcal{R}_u} b_u(v) Q_u^{R'} \geq \frac{1}{8}|\mathcal{R}_v|$.
\end{proof}
%
%
% Lemma~\ref{lemma:f_interval} now connects the number of transmissions $\sum_{I_v\in\mathcal{I}_v} Q_v^{I_v}$ and the interference due to other links.
%
% \begin{lemma}\label{lemma:f_interval}
% Every no-regret algorithm with average regret per phase $\epsilon<\frac 14$ 
% yields an action sequence that is $1$-blocking.
% \end{lemma}
%
% For the proof we compare the utility of the action sequence to that of
% sending and not sending throughout the entire time.
%
Combing these insights with $\mu=\frac{1}{2} \delta$, Theorem~\ref{theo:generic}
implies an approximation factor in $O\left(C/\delta\right)$ for
(individual) $(T', 1-\delta)$-bounded jammers. Additionally, the following corollary follows from
Corollary~\ref{cor:generic_exact}.

\begin{corollary}\label{corr:exact-jammer_const}
Every sequence of action vectors with average regret per phase of $\epsilon
\leq \frac{1}{4 n}$ for all links yields an
$O(1)$-approximation against global $(T', 1-\delta)$-exact adversaries.
\end{corollary}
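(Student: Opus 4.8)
The plan is to combine the structural Lemma~\ref{lemma:w_interval} with the generic exact-adversary bound of Corollary~\ref{cor:generic_exact}, exploiting the fact that the extra factor of $\delta$ appearing in the numerator of that corollary exactly cancels the $1/\delta$ contributed by the phase-success threshold $\mu = \frac12\delta$ fixed in the construction of Section~\ref{section:Interval}.

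First I would invoke Lemma~\ref{lemma:w_interval}. Since the hypothesis assumes average regret per phase $\epsilon \le \frac{1}{4n} < \frac14$, the lemma guarantees that the computed action sequence is $(1,\epsilon)$-successful and $1$-blocking. In the language of the generic template this fixes the two key parameters as $\gamma = 1$ and $\eta = 1$. Next I would verify the applicability condition of Corollary~\ref{cor:generic_exact}, which requires $\epsilon < \frac{1}{4n}\gamma\eta$; with $\gamma = \eta = 1$ this is precisely $\epsilon < \frac{1}{4n}$, met by the hypothesis $\epsilon \le \frac{1}{4n}$ (adjusting the constant in the regret bound by an arbitrarily small amount if the boundary case must be strict).

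Finally I would substitute the parameter values into the approximation factor $O\!\left(\frac{C\cdot\delta}{\mu\cdot\gamma\cdot\eta}\right)$ supplied by Corollary~\ref{cor:generic_exact}. Plugging in $\gamma = \eta = 1$ and $\mu = \frac12\delta$ gives $O\!\left(\frac{C\cdot\delta}{\frac12\delta}\right) = O(C)$, and since we assume constant $C$-independence throughout (so $C = O(1)$), the bound is $O(1)$, as claimed.

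The only genuine point—really a conceptual remark rather than a calculation—is recognizing why the $\delta$ cancels. For an exact jammer the average single-slot optimum is a factor $\delta$ below the jammer-free single-slot optimum, so Corollary~\ref{cor:generic_exact} already measures against the stronger (jammer-free) benchmark and pays the factor $\delta$ for doing so; this factor is exactly compensated by the relaxed per-phase success requirement $\mu = \frac12\delta$, which demands only a $\delta$-fraction of non-jammed steps within a phase to count it as successful. Everything else is a direct instantiation of the two previously established results, so I would not expect any further obstacle.
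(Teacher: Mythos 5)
Your proposal is correct and follows essentially the same route as the paper: it instantiates Lemma~\ref{lemma:w_interval} to get $\gamma=\eta=1$, sets $\mu=\tfrac12\delta$ from the construction of Section~\ref{section:Interval}, and plugs these into Corollary~\ref{cor:generic_exact} so that the factor $\delta$ cancels against $\mu$, yielding $O(C)=O(1)$. The only (harmless) extra care you take is over the strict versus non-strict inequality $\epsilon<\tfrac{1}{4n}\gamma\eta$, which the paper itself glosses over.
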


\subsection{Unknown $T'$}
\label{section:delta2}
For the previous results it is necessary to know both $T'$ and $\delta$ to
design utility function and phase length. In this section, we show that one can
even use regret-learning to reach an $O\left(1/\delta\right)$-approximation if
the bound on $T'$ is not known.

Let us consider when only $\delta$ is known to the links. We use the following
utility function and learn in every time step by setting the phase length to be
$k=1$.
\[
u_i^{(t)}(s_i,s_{-i}) = \begin{cases} 
1 & \text{if $\ell_i$ transmits successfully}\\
-\frac{\delta_v}{2-\delta_v} & \text{if $\ell_i$ transmits unsuccessfully}\\
0 & \text{otherwise}
\end{cases}
\]
\begin{theorem}\label{theo:square}
Every sequence of action vectors with average regret $\epsilon \leq \frac{1}{4 n}\cdot\frac{\delta^2}{2-\delta}$ for all links yields an $O(1/\delta^2)$-approximation against individual $(T',1-\delta)$-bounded adversaries and an $O(1/\delta)$-approximation against $(T',1-\delta)$-exact adversaries.
\end{theorem}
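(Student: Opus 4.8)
The plan is to instantiate the general template of Section~\ref{sect:general} with phase length $k=1$ (so that $\mu=1$, since a length-one phase is successful exactly when its single step is) and to show that the sequence produced by any no-regret algorithm under the given utility is $(\gamma,\epsilon')$-successful with $\gamma=\delta$ and $\eta$-blocking with $\eta=\delta$. Feeding $\gamma=\eta=\delta$, $\mu=1$, $C=O(1)$ into Theorem~\ref{theo:generic} yields the factor $O\!\left(\tfrac{C}{\mu\gamma\eta}\right)=O(1/\delta^2)$ for individual bounded adversaries, while Corollary~\ref{cor:generic_exact} supplies the extra $\delta$ factor and gives $O(1/\delta)$ for the exact case. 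The whole argument therefore reduces to establishing the two template properties with the correct parameters, and to checking that the regret hypothesis matches the one needed by Theorem~\ref{theo:generic}.

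For $(\gamma,\epsilon')$-successfulness I would compare against the fixed action ``never send'', whose utility is $0$. Writing $S=w_vT$ and $U=(q_v-w_v)T$ for the numbers of successful and unsuccessful transmissions, the algorithm's utility is $w_vT-\tfrac{\delta}{2-\delta}(q_v-w_v)T$, so the no-regret bound $0-\big(w_vT-\tfrac{\delta}{2-\delta}(q_v-w_v)T\big)\le \epsilon T$ rearranges to $\tfrac{\delta}{2-\delta}(q_v-w_v)\le w_v+\epsilon$. Solving for $q_v$ gives $q_v\le \tfrac{2}{\delta}w_v+\tfrac{2-\delta}{\delta}\epsilon$, which is exactly Definition~\ref{def:success} with $\gamma=\delta$ and success-parameter $\epsilon'=(2-\delta)\epsilon$. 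Note that the required condition $\epsilon'<\tfrac{1}{4n}\gamma\eta=\tfrac{\delta^2}{4n}$ then reads $(2-\delta)\epsilon<\tfrac{\delta^2}{4n}$, i.e. $\epsilon<\tfrac{1}{4n}\cdot\tfrac{\delta^2}{2-\delta}$, precisely the hypothesis of the theorem.

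For $\eta$-blocking I would compare against ``always send''. Since $\ell_v$'s own action does not change whether a step would fail, a link that always sends succeeds in $(1-f_v)T$ steps and fails in $f_vT$ steps, for utility $\big(1-\tfrac{2}{2-\delta}f_v\big)T$. Bounding the algorithm's utility crudely by $q_vT$ and applying no-regret yields $\big(1-\tfrac{2}{2-\delta}f_v\big)-q_v\le\epsilon$, hence $f_v\ge\tfrac{2-\delta}{2}(1-q_v-\epsilon)$, which for $q_v\le\tfrac14\delta$ exceeds $\tfrac14\delta=\tfrac14\eta$, giving the first blocking inequality. The second one is the crux: in each failed step that is \emph{not} jammed the interference satisfies $\sum_{u\in L_t}b_u(v)>1$, and the (individual) $(T',1-\delta)$-bound removes at most a $(1-\delta)$-fraction of steps, so summing over time gives $\sum_{u\in V}b_u(v)q_u>f_v-(1-\delta)\ge\tfrac{\delta}{2}-\tfrac{2-\delta}{2}(q_v+\epsilon)$. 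With $q_v\le\tfrac14\delta$ and $\epsilon\le\tfrac{\delta}{8}$ this stays above $\tfrac18\delta=\tfrac18\eta$, completing Definition~\ref{def:blocking} with $\eta=\delta$.

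The main obstacle is this last interference bound. Subtracting the adversary's $(1-\delta)$-share from the lower bound $f_v\approx\tfrac{2-\delta}{2}$ on failed steps leaves a surplus of only $\tfrac{\delta}{2}$ attributable to genuine interference, so the blocking parameter is forced down to $\eta=\Theta(\delta)$ rather than the constant obtained in Lemma~\ref{lemma:w_interval}; one must verify carefully that the error terms $\tfrac{2-\delta}{2}(q_v+\epsilon)$ do not eat up this $\Theta(\delta)$ margin, which is exactly where the smallness of $\epsilon$ relative to $\delta$ is used. Since both $\gamma$ and $\eta$ degrade to $\Theta(\delta)$, Theorem~\ref{theo:generic} produces $O(1/\delta^2)$ for the bounded adversary; the exact adversary recovers one factor of $\delta$ through Corollary~\ref{cor:generic_exact}, yielding $O(1/\delta)$.
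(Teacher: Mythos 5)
Your proposal matches the paper's own proof: the paper likewise sets $k=1$, $\mu=1$, compares the no-regret sequence to the fixed actions ``never send'' and ``always send'' to establish $(\Theta(\delta),\epsilon)$-successfulness and $\delta$-blocking (Lemma~\ref{lemma:w_square}), and then invokes Theorem~\ref{theo:generic} and Corollary~\ref{cor:generic_exact} exactly as you do. The only differences are constant-factor bookkeeping -- the paper takes $\gamma=\delta/2$ and derives the interference bound $f_v\ge\delta/4$ directly by a contradiction argument that keeps jammed and interference-blocked steps separate, whereas you fold them together and subtract the $(1-\delta)$ jammed fraction to land at $\delta/8$ -- and this does not affect the asymptotic guarantees.
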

In this setting, every
no-regret algorithm computes sequences of action vectors that is
$\left(\frac{\delta}{2},\epsilon\right)$-successful and $\delta$-blocking.
Together with $\mu=1$ from the utility function, the theorem follows from
Theorem~\ref{theo:generic} and Corollary~\ref{cor:generic_exact}. 
% 
% The proof can
% be found in the appendix following the same idea as the one of
% Lemma~\ref{lemma:w_interval}.

% Let $\delta_v'$ denote the fraction of all time steps not jammed by the
% adversary. For an $(T', 1-\delta)$-exact adversary it holds $\delta_v'=\delta$
% and for bounded adversaries it holds $\delta_v'\geq \delta$. Here we have $q_v =
% \frac{|\left\{t \growingmid \ell_v\text{ transmits in }t\right\} |}{T}$ as the
% fraction of time steps in which $\ell_v$ transmits and $w_v = \frac{|\left\{t
% \growingmid \ell_v\text{ transmits successfully in }t\right\} |}{T}$ as the
% fraction of time steps in which $\ell_v$ transmits successfully.
%
\begin{lemma}\label{lemma:w_square}
Every no-regret algorithm with average regret per time step $\epsilon <
\frac{1}{4}\cdot\frac{\delta^2}{2-\delta}$ using the given utility computes an action sequence that is
$\left(\frac{\delta}{2},\epsilon\right)$-successful and $\delta$-blocking.
\end{lemma}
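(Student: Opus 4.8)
The plan is to reuse the two-action no-regret argument of Lemma~\ref{lemma:w_interval}, but with single-step phases ($k=1$), so that $q_v$, $w_v$ and $f_v$ are now fractions over the $T$ individual time steps. Write $c=\frac{\delta}{2-\delta}$ for the penalty on a failed transmission and recall $w_v\le q_v$. The utility actually collected by $\ell_v$ is $U_{\mathrm{alg}}=T\bigl(w_v-c(q_v-w_v)\bigr)$, and I would compare it to two fixed actions. Against \emph{never transmit} (utility $0$), the no-regret bound $0-U_{\mathrm{alg}}\le\epsilon T$ gives $c\,q_v\le(1+c)w_v+\epsilon$; since $\tfrac{1+c}{c}=\tfrac{2}{\delta}$ and $\tfrac1c=\tfrac{2-\delta}{\delta}$, this rearranges to $q_v\le\frac{2}{\delta}w_v+\frac{2-\delta}{\delta}\epsilon\le\frac{2}{\delta}(2w_v+\epsilon)$, which is exactly $(\tfrac{\delta}{2},\epsilon)$-successfulness in the sense of Definition~\ref{def:success} with $\gamma=\tfrac{\delta}{2}$. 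Note this part needs no constraint on $\epsilon$.

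For $\delta$-blocking I would fix a link with $q_v\le\tfrac14\delta$ (hence $w_v\le\tfrac14\delta$) and compare to \emph{always transmit}, whose utility is $U_{\mathrm{always}}=T\bigl((1-f_v)-c\,f_v\bigr)$, because $f_v$ is precisely the fraction of steps in which a transmission of $\ell_v$ would fail — either jammed, or facing interference $\sum_{u}b_u(v)>1$ from the opponents' actual actions. The bound $U_{\mathrm{always}}-U_{\mathrm{alg}}\le\epsilon T$, after dropping the nonnegative term $c(q_v-w_v)$, yields $f_v\ge\frac{1-w_v-\epsilon}{1+c}=\frac{2-\delta}{2}\bigl(1-w_v-\epsilon\bigr)$. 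Substituting $w_v\le\tfrac14\delta$ gives $f_v\ge 1-\tfrac{3\delta}{4}+\tfrac{\delta^2}{8}-\tfrac{2-\delta}{2}\epsilon$, which is comfortably at least $\tfrac14\delta$ for $\delta\le1$; that is the first blocking condition of Definition~\ref{def:blocking}.

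The crux is turning this lower bound on $f_v$ into interference, i.e. into $\sum_{u\in V}b_u(v)q_u\ge\tfrac18\delta$. I would split the failing steps counted by $f_v$ as $f_v=P+I$, where $P\le 1-\delta$ is the jammed fraction (the adversary's budget, which holds over the whole horizon and, for an individual adversary, per link) and $I$ is the fraction of non-jammed steps whose interference exceeds $1$. Swapping the order of summation, $\sum_{u\in V}b_u(v)q_u=\frac1T\sum_t\sum_{u\text{ transmitting at }t}b_u(v)\ge I$, since every interference-blocked step contributes more than $1$. Hence $\sum_{u}b_u(v)q_u\ge I=f_v-P\ge f_v-(1-\delta)$, and plugging in the bound on $f_v$ leaves $\tfrac{\delta}{4}+\tfrac{\delta^2}{8}-\tfrac{2-\delta}{2}\epsilon$; the hypothesis $\epsilon<\tfrac14\cdot\tfrac{\delta^2}{2-\delta}$ makes the last term strictly below $\tfrac{\delta^2}{8}$, so the whole expression exceeds $\tfrac{\delta}{4}\ge\tfrac18\delta$, completing $\delta$-blocking.

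The main obstacle — and the reason single-step phases force a genuinely different argument than the phase-based Lemma~\ref{lemma:w_interval} — is exactly this final step: with $k=1$ the adversary can jam arbitrary individual steps, so a bare bound such as $f_v\ge\tfrac14\delta$ could be absorbed entirely by jamming and reveal nothing about interference. The tailored penalty $c=\frac{\delta}{2-\delta}$ is what forces the no-regret comparison to push $f_v$ \emph{above} the jammer's maximal blocking fraction $1-\delta$ by a $\Theta(\delta)$ margin, and the regret threshold $\epsilon<\tfrac14\frac{\delta^2}{2-\delta}$ is precisely the condition that keeps that margin positive. I would therefore organize the write-up as successfulness first, then the $f_v$ bound, then the jamming-budget subtraction, flagging that the single inequality $\tfrac{2-\delta}{2}\epsilon<\tfrac{\delta^2}{8}$ is what both blocking conditions ultimately rest on.
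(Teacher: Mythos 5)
Your proof is correct and follows essentially the same route as the paper: both parts rest on comparing the realized utility to the two fixed actions (never transmit, always transmit) and on the jammer's budget $1-\delta'_v\le 1-\delta$, with the threshold $\epsilon<\frac{1}{4}\cdot\frac{\delta^2}{2-\delta}$ entering at exactly the same point. The only differences are organizational — you derive the blocking bound directly and subtract the jamming budget at the end, where the paper separates jammed and interference-blocked steps from the start and argues by contradiction — and your explicit final step from the interference-blocked fraction to $\sum_{u}b_u(v)q_u$ is a welcome elaboration of a step the paper only asserts.
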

Let $\delta_v'$ denote the fraction of all time steps not jammed by the
adversary. For an $(T', 1-\delta)$-exact adversary it holds $\delta_v'=\delta$
and for bounded adversaries it holds $\delta_v'\geq \delta$. Here we have $q_v =
\frac{|\left\{t \growingmid \ell_v\text{ transmits in }t\right\} |}{T}$ as the
fraction of time steps in which $\ell_v$ transmits and $w_v = \frac{|\left\{t
\growingmid \ell_v\text{ transmits successfully in }t\right\} |}{T}$ as the
fraction of time steps in which $\ell_v$ transmits successfully.

\begin{proof}
To show that $\left(\frac{\delta}{2},\epsilon\right)$-successfulness holds, we
need to prove \[q_u\leq \frac{4}{\delta} w_u + \epsilon \frac{2}{\delta} \enspace .
\]
The regret is at most $\epsilon$ compared to not sending, which yields
\[(q_u - w_u)\frac{\delta_v}{2-\delta_v} - w_u \leq \epsilon \enspace.\]
This implies
\[q_u \frac{\delta_v}{2-\delta_v} \leq \left(1+\frac{\delta_v}{2-\delta_v}
\right) w_u + \epsilon \leq 2 w_u + \epsilon\enspace ,\]
and thus $q_u \frac{\delta_v}{2} \leq 2 w_u + \epsilon$.

We will now prove the $\delta$-blocking property.
Always sending would give a utility of $-(1-\delta_v')\frac{\delta}{2-\delta}-f_v
\frac{\delta}{2-\delta}+(\delta_v'-f_v) $. The no-regret sequence gets at most a utility of $q_u$. This yields
\[
-(1-\delta_v')\frac{\delta}{2-\delta}-f_v\frac{\delta}{2-\delta}+(\delta_v'-f_v)
- q_v \leq \epsilon \enspace .\]
It holds $-(1-\delta)\frac{\delta}{2-\delta} =
\frac{1}{2}\frac{\delta^2}{2-\delta} - \frac{1}{2}\delta$. 
As $(1-\delta'_v)\leq (1-\delta)$ this implies
\[
\frac{1}{2}\frac{\delta^2}{2-\delta} -
\frac{1}{2}\delta-f_v\frac{\delta}{2-\delta}+(\delta_v'-f_v)  - q_v \leq
\epsilon
\enspace
.\]
For contradiction we will now assume that $f_v < \frac{1}{4} \delta$.
This yields
\[
\frac{1}{2}\frac{\delta^2}{2-\delta} -
\frac{1}{2}\delta-\frac{1}{4}\frac{\delta^2}{2-\delta}+\delta_v'-\frac{1}{4}\delta
- q_v \leq \epsilon \enspace.\]
Plugging in $q_v< \frac{1}{4} \delta$ and $\delta_v'>\delta$ yields
\[
\frac{1}{4}\frac{\delta^2}{2-\delta} + \frac{1}{4}\delta
- \frac{1}{4} \delta
\leq 
\epsilon.\]

For $\epsilon < \frac{1}{4} \cdot \frac{\delta^2}{2-\delta}$ this is a contradiction.
From $f_v \geq \frac{1}{4} \delta$ we conclude that the sum of conflict graph
weights is at least $\sum_{u\in V} b_u(v) q_u \geq \frac{1}{4}\delta$.
\end{proof}

%
% Lemma~\ref{lemma:f_square} follows a similar thread by comparing the utility to
% that of always sending and can also be found in the appendix.
%
% Similarly to Lemma~\ref{lemma:f_interval} before, we can prove that in this
% setting the algorithm is $\delta$-blocking.
% %
% \begin{lemma}\label{lemma:f_square}
% Every no-regret algorithm with average regret per time step $\epsilon <
% \frac{1}{4}\frac{\delta^2}{2-\delta}$ computes an action sequence that is 
% $\delta$-blocking.
% \end{lemma}
% 
\subsection{Unknown $\delta$}
For asynchronous regret learning it seems to be necessary to know $\delta$, as
guessing a larger $\delta$ can have the jammer tripping an algorithm into
experiencing much interference and crediting this to other links. As soon as the
guessed $\delta$ is at least twice the actual one, the no-regret algorithm can
be arbitrarily bad. The adversary can force the no-regret algorithm to consider
not-sending to be the best strategy in hindsight.

While the learning algorithms for known $\delta$ in
Section~\ref{section:Interval} easily adjust to links joining later, we here
give a synchronized algorithm for unknown $\delta$, in which all links start the
algorithm at the same time. The basic idea is to test different values for
$\delta$ in a coordinated fashion -- half of all phases $\delta=\frac{1}{2}$ is
assumed, in a quarter of all phases $\delta=\frac{1}{4}$ and so on. This implies
that the correct $\delta$ (up to a factor of $2$) is considered in a
$\delta$-fraction of all phases. 
% This can easily be done by considering $\delta=\frac{1}{2}$ in every second
% phase and in every second phase of the remaining phases $\delta=\frac{1}{4}$ and so on.

This way, in a $\delta$-fraction of all phases our synchronized algorithm
assumes the jammer to be $(T',1-\delta)$-bounded. In the phases where the correct
$\delta$ is tried, the algorithm achieves a constant-factor approximation due to
Theorem~\ref{theo:approx_interval} or an $O\left(1 /
\delta\right)$-approximation due to Theorem~\ref{theo:square} when $T'$ is not
known. 
%This directly implies the following result.
\begin{theorem}\label{theo:synch}
There exists synchronized algorithms that yield against any $(T',\delta)$-bounded adversary an
(1) $O(1/\delta)$-approximation without knowledge of $\delta$, and (2) $O(1/\delta^2)$-approximation without knowledge of $\delta$ and $T'$.
\end{theorem}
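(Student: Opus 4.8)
The plan is to turn the ``coordinated guessing'' idea into a concrete algorithm by running, in parallel, one independent no-regret instance for each candidate value $\delta_j = 2^{-j}$, $j = 1,2,\dots$, and interleaving their phases according to a fixed deterministic schedule, so that the instance for $\delta_j$ owns a $2^{-j}$-fraction of all phases (and $\sum_j 2^{-j}=1$). For part~(1), where $T'$ is known, each instance uses the utility and phase length $k = T'$ of Section~\ref{section:Interval} with its own success threshold $\mu = \tfrac12\delta_j$; for part~(2), where $T'$ is unknown, each instance uses the per-step utility of Section~\ref{section:delta2} with penalty $\tfrac{\delta_j}{2-\delta_j}$ and $k=1$. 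Since all links start simultaneously and follow the same schedule, the partition of phases into guess classes is common knowledge and needs no coordination messages, which is exactly why synchronization is assumed here.

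First I would isolate the \emph{matched} guess $j^\ast$ with $\delta_{j^\ast}\in[\delta,2\delta)$; by construction its class $S_{j^\ast}$ is a $2^{-j^\ast}=\Theta(\delta)$-fraction of all phases. The key point is that on $S_{j^\ast}$ the hypotheses of the known-$\delta$ analysis are already in force. For part~(1) each phase is a contiguous block of $T'$ steps, i.e.\ a full window, so a $(T',1-\delta)$-bounded adversary must leave at least a $\delta\ge\mu_{j^\ast}$ fraction of each such phase free and a solo sender succeeds in every phase of $S_{j^\ast}$; this is precisely what Lemma~\ref{lemma:w_interval} needs, so the restricted sequence is $(1,\epsilon)$-successful and $1$-blocking. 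For part~(2) I would instead invoke the free-fraction hypothesis $\delta'_v\ge\delta$ of Lemma~\ref{lemma:w_square} on the matched sub-sequence, obtaining $(\tfrac{\delta}{2},\epsilon)$-successfulness and $\delta$-blocking. Feeding these parameters into Theorem~\ref{theo:generic} (and, where the matched class behaves like an exact jammer, Corollary~\ref{cor:generic_exact}) yields the matched-class approximation: the exact-style constant of Corollary~\ref{corr:exact-jammer_const} for part~(1) and the $O(1/\delta^2)$ bound of Theorem~\ref{theo:square} for part~(2).

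The final step is the accounting that converts a guarantee on the $\Theta(\delta)$-fraction $S_{j^\ast}$ into a global bound. I would keep the single-slot-optimum benchmark that Theorem~\ref{theo:generic} uses internally for a global adversary, so that the matched-class guarantee is already measured against the \emph{full} per-slot optimum rather than a restricted one. Because every phase of $S_{j^\ast}$ is forced to contribute its own $\delta$-fraction of free steps, the optimum mass carried by $S_{j^\ast}$ is a $\Theta(\delta)$-share of the total; composing this fraction with the exact-style discount available when $T'$ is known recovers the advertised $O(1/\delta)$ for part~(1), while the weaker per-class bound for unknown $T'$ gives $O(1/\delta^2)$ for part~(2). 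This is also where the gap between the two statements comes from: part~(1) saves a factor $1/\delta$ because full-window matched phases let the comparison behave like an exact jammer.

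I expect the main obstacle to be exactly this last accounting against a \emph{bounded} rather than exact adversary. A bounded jammer may use far less than its budget and may concentrate the freedom it grants outside $S_{j^\ast}$; moreover, since $S_{j^\ast}$ occupies only a $\Theta(\delta)$-fraction of the steps while the jammer may block a $(1-\delta)$-fraction, the per-window budget alone does \emph{not} prevent it from killing $S_{j^\ast}$ entirely in a lightly-jammed regime, so a fixed matched class need not capture a $\Theta(\delta)$-share of the optimum. The fix is to credit, in each regime, the guess class whose threshold $\mu_j=\tfrac12\delta_j$ matches the jammer's \emph{actual} free fraction: when jamming is heavy the class $j^\ast$ carries the optimum (which is then itself small), and when jamming is light a larger guess $\delta_j$ is productive and owns a correspondingly larger $2^{-j}$-fraction of phases. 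The delicate part is a charging argument showing that for every admissible jamming pattern some guess class is simultaneously productive \emph{and} large enough to pay for the global optimum, after which Theorem~\ref{theo:generic} and Corollary~\ref{cor:generic_exact} close both cases.
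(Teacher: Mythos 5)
Your construction---interleaving independent no-regret instances for the guesses $\delta_j = 2^{-j}$ on a $2^{-j}$-fraction of phases, isolating the matched guess $\delta_{j^\ast}=\Theta(\delta)$, and invoking Lemma~\ref{lemma:w_interval} (resp.\ Lemma~\ref{lemma:w_square}) together with Theorem~\ref{theo:generic} and Corollary~\ref{cor:generic_exact} on that class before paying the $\Theta(\delta)$ fraction---is exactly the paper's argument. Be aware, though, that the ``delicate charging argument'' you flag in your final paragraph is not carried out in the paper either: its justification stops at crediting the matched class (implicitly measuring against the exact-jammer benchmark $\delta\cdot|OPT|$), so your observation that a bounded jammer can concentrate the optimum's mass in guess classes jammed to just below their thresholds $\mu_j=\tfrac12\delta_j$---where a class $j$ then contributes up to $2^{-j}\mu_j|OPT|$ to the benchmark while producing nothing, summing to $\Theta(1)\cdot|OPT|$ against only $\Theta(\delta^2)|OPT|$ of algorithmic throughput---is a legitimate concern that the published proof does not address. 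In that sense your write-up is more careful than the paper's own two-paragraph justification; actually closing (or refuting) that charging step for part~(1) would go beyond reconstructing the paper.
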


Note that the running time increases by a factor of $1/\delta$ over the
asynchronous case, as we need the regret to be sufficiently low in the phases
with the correct assumption on $\delta$.

\section{Stochastic Adversary}
\label{section:stoch}
In this section we extend results for the bounded adversary to the stochastic
adversary. We show that after a sufficient number of time steps an algorithm
obtains very similar guarantees against a stochastic adversary as against a
corresponding $(T',1-\delta)$-exact adversary considered before.

Essentially, we consider no-regret algorithms with utility functions as
discussed before and apply slight modifications as follows. For algorithms where
$\mu < \delta$ and $k>1$ we adjust the length of phases in order to bound the
number of phases caused to be unsuccessful by the adversary. This allows to
concentrate the behavior of the stochastic jammer to an ``expected'' exact
jammer. It also allows us to show that in the stochastic
setting an algorithm loses at most a constant factor in its $\eta$-blocking
property after a sufficiently long time. We observe that against
the non-individual stochastic adversary, the optimum is at most $\frac{9 \cdot
\delta}{8}$-th of a single-slot optimum. 

Let $p_z$ denote the probability that the stochastic adversary makes a phase
unsuccessful.
% Recall that a phase is successful due to interference if the fraction of successful time steps is at least $\mu$. Thus, a phase is unsuccessful due to
% the jammer if the jammer jams more than $(1-\delta + \mu)\cdot k$ time steps.
%  It
% is only considered unsuccessful due to interference from other links if it is
% unsuccessful and the number of unjammed time steps is more than $(\delta -
% \mu)\cdot k$.  

\begin{lemma}\label{lemma:stoch_intlen}
% Let $\mu<\delta$. Then for $k\geq 1$ the probability that a stochastic adversary jams more than $(1-\delta + \mu)\cdot k$ time steps in a single phase is at most
% \[
% \exp\left(- \frac{\left(\frac{\mu}{\delta}\right)^2 \delta k}{2}\right)
% \enspace .
% \]
Let $\mu<\delta$. Then for $k\geq 1$ it holds
\[
p_z \leq \exp\left(- \frac{\left(\frac{\mu}{\delta}\right)^2 \delta k}{2}\right)
\enspace . \]
\end{lemma}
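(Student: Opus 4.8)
The plan is to read $p_z$ as a binomial lower-tail probability and bound it with a standard Chernoff inequality. Fix a single link and a phase $R$ of length $k$ during which it transmits in every step. Separating the adversary's effect from inter-link interference (the latter is handled by the blocking analysis), whether a step of $R$ succeeds is determined solely by whether the stochastic adversary jams it. By definition the adversary jams each step independently with probability $1-\delta$, so the indicators $Z_1,\dots,Z_k$ of ``step $j$ not jammed'' are i.i.d.\ Bernoulli variables with $\Pr{Z_j=1}=\delta$, mutually independent because jamming is independent across time steps. Writing $X=\sum_{j=1}^k Z_j$ we have $\Ex{X}=\delta k$, and since a phase counts as unsuccessful exactly when strictly fewer than a $\mu$-fraction of its steps succeed, we obtain the clean identification $p_z=\Pr{X<\mu k}$.

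I would then apply the multiplicative Chernoff bound to the lower tail of $X$. Because $\mu<\delta$, the threshold $\mu k$ sits strictly below the mean $\delta k$, so it can be written as $\mu k=(1-\theta)\Ex{X}$ with $\theta=1-\mu/\delta\in(0,1)$. The textbook estimate $\Pr{X\le(1-\theta)\Ex{X}}\le\exp(-\theta^2\Ex{X}/2)$ then yields
\[
p_z\;\le\;\exp\!\left(-\frac{\bigl(1-\tfrac{\mu}{\delta}\bigr)^2\,\delta k}{2}\right),
\]
an exponent that decays linearly in the phase length $k$. This is precisely what the surrounding section exploits: by enlarging $k$ one can push $p_z$ below any desired level and thereby ``concentrate'' the stochastic jammer onto an expected exact jammer.

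The inequality itself is routine, so the real work lies in the first, modeling step: one must justify that ``the adversary makes the phase unsuccessful'' collapses exactly to the event $\{X<\mu k\}$. This requires (i) isolating jamming from inter-link interference, so that a transmitting link loses a phase only when too few of its $k$ steps survive jamming, and (ii) invoking the model's stipulation that jams are independent across time steps (correlations are permitted only across links, which is immaterial for a single link's phase). The one point that deserves care is the exact constant in the exponent: the honest deviation parameter is the \emph{relative} shortfall $\theta=1-\mu/\delta$, which gives $\bigl(1-\mu/\delta\bigr)^2\delta k/2=(\delta-\mu)^2k/(2\delta)$; one should reconcile the statement's exponent against this, since a factor $(1-\mu/\delta)^2$ is forced by the fact that $p_z$ must stay bounded away from $0$ as $\mu\uparrow\delta$. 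Beyond that, no rounding or integrality issues arise, and the estimate holds verbatim for every integer $k\ge1$.
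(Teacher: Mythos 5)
There is a genuine mismatch, and it sits exactly in the modeling step you yourself flag as the crux. The paper does not define $p_z$ as $\Pr{X<\mu k}$ (fewer than $\mu k$ unjammed steps). Immediately before the proof it stipulates that a phase is ``unsuccessful due to the jammer'' when the jammer jams more than $(1-\delta+\mu)k$ steps, i.e.\ when the number $Y$ of unjammed steps satisfies $Y<(\delta-\mu)k$. Writing $(\delta-\mu)k=\left(1-\frac{\mu}{\delta}\right)\Ex{Y}$ with $\Ex{Y}=\delta k$, the relative deviation in the Chernoff bound is $\theta=\frac{\mu}{\delta}$ (not $1-\frac{\mu}{\delta}$), and the lower-tail estimate $\Pr{Y<(1-\theta)\Ex{Y}}\le\exp\left(-\theta^2\Ex{Y}/2\right)$ gives precisely the claimed exponent $\left(\frac{\mu}{\delta}\right)^2\delta k/2$. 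So the statement's constant is not in need of ``reconciliation''; it is forced by the paper's definition of the event, which measures the jammer's excess over its expected budget $(1-\delta)k$ by a margin of $\mu k$, rather than asking whether jamming alone leaves fewer than $\mu k$ survivable steps.

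Your Chernoff computation is mechanically correct for the event you chose, and your sanity check (``$\Pr{X<\mu k}$ cannot vanish as $\mu\uparrow\delta$'') is valid for that event --- but it is evidence that you bounded the wrong probability, not that the lemma is misstated: for the paper's event the threshold $(\delta-\mu)k$ tends to $0$ as $\mu\uparrow\delta$, consistent with the exponent tending to $\delta k/2$. The two events (and the two exponents) happen to coincide in the paper's main application, where $\mu=\delta/2$, which may be why the discrepancy is easy to miss; but as stated for general $\mu<\delta$ your argument proves a different inequality. To repair the proof you only need to replace the identification $p_z=\Pr{X<\mu k}$ by $p_z=\Pr{Y<(\delta-\mu)k}$ and rerun the same Chernoff bound with $\theta=\mu/\delta$.
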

Recall that a phase is successful due to interference if the fraction of successful time steps is at least $\nu$. Thus, a phase is unsuccessful due to
the jammer if the jammer jams more than $(1-\delta + \mu)\cdot k$ time steps.

\begin{proof}
Consider the random variable $X^{(t)}\in\{0,1\}$ indicating whether time step
$t$ is not jammed for link $\ell_v$. This way $Y = \sum_{t\in R_v} X^{(t)}$ is
the number of unjammed time steps in a phase $R_v$. It is sufficient to consider
$\Pr{ Y < (\delta - \mu)\cdot k}$. This is equivalent to $\Pr{ Y < \left(1 -
\frac{\mu}{\delta}\right) \Ex{Y}}$.

In every time step the adversary acts stochastically independent and we can
apply Chernoff bounds. This yields \[ \Pr{ Y < \left(1 -
\frac{\mu}{\delta}\right) \Ex{Y}} \leq \exp\left( -
\frac{\left(\frac{\mu}{\delta}\right)^2}{2} \delta \cdot k\right)\enspace .
\]
\end{proof}
As the second step, we will now prove that after sufficiently many phases we lose at most a constant factor in the $\eta$-blocking property.
\begin{lemma}\label{lemma:stoch_intnum}
Consider an algorithm that computes a sequence of actions which is
$\eta$-blocking against an $(T', 1-\delta)$-exact adversary. 
% Let $p_z$ denote the probability that the stochastic adversary makes a phase
% unsuccessful.
%
After $T \geq \frac{\max \{p_z, 1-p_z\}}{\eta^2} \cdot 8^2 \cdot 3 \cdot c \cdot
\ln (n) + \ln (n)$ phases, the computed sequence is $\frac{\eta}{2}$-blocking
against a stochastic adversary with probability at least $1-\frac{1}{n^c}$.
\end{lemma}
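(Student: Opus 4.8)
The plan is to isolate the single feature that distinguishes the stochastic adversary from the exact one — that a phase may now be rendered unsuccessful by the jammer acting alone — and to control the number of such phases by concentration, while leaving the no-regret part of the argument untouched. First I would observe that the bound ``the total fraction of unsuccessful phases is at least $\frac14\eta$ for every link with $q_v\le\frac14\eta$,'' established inside the proof of Lemma~\ref{lemma:w_interval}, uses only the utility function and the no-regret property and never invokes exactness of the jammer; hence it holds verbatim against the stochastic adversary. The only place exactness entered was the step ``in all unsuccessful phases $\ell_v$ experiences interference,'' which relied on every phase retaining at least $(\delta-\mu)k$ unjammed steps. For the stochastic jammer this can fail, but only on phases it over-jams; by Lemma~\ref{lemma:stoch_intlen} each phase is over-jammed independently with probability $p_z$. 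These events are independent across phases because jamming is independent between time steps and distinct phases occupy disjoint intervals, and the over-jamming indicator $Z_R$ of a phase depends on the jammer's coins alone, not on whether $\ell_v$ transmits, so $\{Z_R\}_{R\in\mathcal{R}_v}$ is a product of Bernoulli($p_z$) variables regardless of how the algorithm reacts to feedback.

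Next I would apply a multiplicative Chernoff bound to $S_v=\sum_{R\in\mathcal{R}_v}Z_R$, whose mean is $p_z T$ and whose complement has mean $(1-p_z)T$. Targeting an additive deviation of $\frac18\eta T$ and a per-link failure probability below $n^{-(c+1)}$ forces the horizon $T\ge\frac{\max\{p_z,1-p_z\}}{\eta^2}\cdot 8^2\cdot 3\cdot c\cdot\ln(n)$: the $8^2$ comes from the target deviation $\frac{\eta}{8}$, the factor $3$ from the Chernoff exponent $\epsilon^2\mu/3$, the $\max\{p_z,1-p_z\}$ from expressing that exponent through the mean of whichever of the two counts is binding, and the trailing $+\ln(n)$ together with a union bound over the $n$ links upgrades the per-link guarantee to the claimed overall probability $1-\frac{1}{n^c}$.

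On this high-probability event the number of over-jammed phases of $\ell_v$ is at most $\frac18\eta T$, using that the phase length has already been chosen through Lemma~\ref{lemma:stoch_intlen} to keep $p_z$ small. Every unsuccessful phase is either over-jammed or retains at least $(\delta-\mu)k$ unjammed steps, and in the latter case its failure is attributable to interference exactly as in Lemma~\ref{lemma:w_interval}. Subtracting, the interference-caused unsuccessful phases number at least $\frac14\eta T-\frac18\eta T=\frac18\eta T$, which is the $\frac14\cdot\frac{\eta}{2}$ demanded by the first clause of Definition~\ref{def:blocking} for $\frac{\eta}{2}$-blocking. Feeding this halved count into the asynchronous-overlap computation of Lemma~\ref{lemma:w_interval} (which again costs only a factor $2$) yields $\sum_{u\in V}b_u(v)q_u\ge\frac{1}{16}\eta=\frac18\cdot\frac{\eta}{2}$, the second clause, so both clauses of $\frac{\eta}{2}$-blocking hold.

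I expect the main obstacle to lie in the bookkeeping of the concentration step. One must verify that the over-jamming events are genuinely independent of the algorithm's adaptive choices, so that the Chernoff bound applies to a fixed Bernoulli product measure rather than to a sequence entangled with the feedback, and one must reconcile the constants so that subtracting both the mean over-jamming rate $p_z$ and its fluctuation still leaves at least $\frac18\eta T$ interference-caused failures. This is precisely why $p_z$ must first be driven down via the phase-length choice of Lemma~\ref{lemma:stoch_intlen} before the horizon bound on $T$ is invoked; getting these two reductions to compose cleanly, rather than each independently costing a constant, is the delicate part.
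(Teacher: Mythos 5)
Your proposal matches the paper's proof in all essentials: a per-link Chernoff bound on the number of phases the stochastic jammer over-jams, with the two-case split (driven by whether $\eta$ is small or large relative to $p_z$, producing the $\max\{p_z,1-p_z\}$ and the constants $8^2\cdot 3$), a union bound over the $n$ links accounting for the trailing $+\ln(n)$, and the observation that subtracting at most $\tfrac18\eta T$ jammer-caused failures from the $\tfrac14\eta T$ guaranteed by $\eta$-blocking leaves $\tfrac18\eta T$ interference-caused failures, i.e.\ $\tfrac{\eta}{2}$-blocking. Your explicit treatment of the independence of the over-jamming indicators from the algorithm's adaptive choices, and of the need to first drive $p_z$ down via Lemma~\ref{lemma:stoch_intlen}, is if anything slightly more careful than the paper's own write-up.
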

\begin{proof}
We will bound the probability that $\frac{1}{8}\eta$ phases more than expected are unsuccessful due to the adversary.

Consider the random variable $X_v^{(R)}\in\{0,1\}$ indicating whether phase $R$
is not unsuccessful due to the jammer for link $\ell_v$. This way $Y_v =
\sum_{R\in \mathcal{R}_v} X_v^{(R)}$ is the number of unjammed phases. In the
same way, we define the number of jammed phases $\overline{Y}_v = T-Y_v$.

For proving the lemma we consider $\Pr{Y_v \geq \Ex{Y_v}-\frac{1}{8} \eta T}$. Definition~\ref{def:blocking} yields that for links sending rarely the number of unsuccessful phases is at least an $\frac{1}{4}\eta$-fraction. Note that the expected behavior of the stochastic adversary matches that of an exact adversary. Bounding the number of unsuccessful phases in addition to the expectation by $\frac{1}{8} \eta T$ then directly implies the $\frac{\eta}{2}$-blocking property.

We will first consider the case $\eta < 8 p_z$, which yields
\[
\Pr{Y_v \geq \Ex{Y_v}-\frac{1}{8} \eta T}\quad = \quad 1 - \Pr{Y_v < \Ex{Y_v}-\frac{1}{8}
\eta T}\enspace ,\]
which evaluates to
$1 - \Pr{Y_v < \left(1-\frac{1}{8\cdot p_z} \eta \right) \Ex{Y_v}}$.

Using Chernoff bounds again we get
\[
 \Pr{Y_v < \left(1-\frac{1}{8\cdot p_z} \eta \right) \Ex{Y_v}} \quad \leq \quad 
\exp \left( - \frac{\left(\frac{\eta}{8 p_z}\right)^2}{2}p_z \cdot T
\right)\enspace .
\]
Similarly, the second case $\eta \geq 8 p_z$ yields
\[
\Pr{\overline{Y}_v >
\left(1+\frac{1}{8\cdot (1-p_z)} \eta \right) \Ex{\overline{Y}_v}} 
\quad \leq \quad \exp\left( - \frac{\left(\frac{\eta}{8 (1- p_z)}\right)^2}{3}(1-p_z)
\cdot T \right)\enspace .
\]
In both cases, we can apply a union bound over all links. Setting $T\geq
\frac{\max \{p_z, 1-p_z\}}{\eta^2} \cdot 8^2 \cdot 3 \cdot c \cdot \ln (n) + \ln
(n)$ directly yields the claim.
\end{proof}

Additionally to using Lemmas~\ref{lemma:stoch_intlen}
and~\ref{lemma:stoch_intnum}, we will also bound the number of time steps till
the jammer converges to an exact one to yield that the optimum
against the stochastic adversary is close to the one against an exact adversary.
\begin{lemma}\label{lemma:stoch_opt}
After $T\geq \frac{8^2}{3 \delta} \cdot c \cdot \ln (n)$ time steps it holds with probability $1-\frac{1}{n^c}$ that the optimum against the stochastic adversary is at most $\frac{9}{8}$ of the optimum against an exact adversary.
\end{lemma}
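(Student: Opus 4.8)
The plan is to reduce the comparison of the two optima to a single concentration statement about the number of non-jammed time steps, and then to invoke a multiplicative Chernoff bound exactly as in the proof of Lemma~\ref{lemma:stoch_intnum}.

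First I would pin down what each optimum is. Since the network is static and the (non-individual) adversary jams all links simultaneously, ``free'' is a property of a whole time step, and every free slot admits the \emph{same} single-slot optimum; denote its cardinality by $\mathrm{OPT}_1$. Over a window of $T$ time steps, a global $(T',1-\delta)$-exact adversary leaves exactly $\delta T$ slots free, so the optimum against the exact adversary equals $\delta T\cdot\mathrm{OPT}_1$. Against the stochastic adversary, let $N$ be the number of non-jammed slots; with full knowledge of the jamming pattern one schedules $\mathrm{OPT}_1$ in each free slot and nothing in the jammed ones, so the stochastic optimum equals $N\cdot\mathrm{OPT}_1$. Hence the ratio of the two optima is exactly $N/(\delta T)$, and the claim ``stochastic optimum $\le \tfrac98$ exact optimum'' is equivalent to the single event $N\le\tfrac98\,\delta T=\bigl(1+\tfrac18\bigr)\delta T$.

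Next I would bound the complementary upper-tail event. Each time step is jammed independently with probability $1-\delta$, so $N$ is a sum of $T$ independent $\{0,1\}$ variables with $\Ex{N}=\delta T$. Writing the deviation as a $\tfrac18$-fraction of the mean and applying the same upper-tail Chernoff bound used in Lemma~\ref{lemma:stoch_intnum} gives
\[
\Pr{N>\bigl(1+\tfrac18\bigr)\delta T}\;\le\;\exp\!\left(-\frac{(1/8)^2}{3}\,\delta T\right)\enspace .
\]
Substituting the assumed lower bound on $T$ (whose constant is chosen precisely so that the exponent exceeds $c\ln n$) makes the right-hand side at most $n^{-c}$. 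On the complementary event, which has probability at least $1-n^{-c}$, we have $N\le\tfrac98\delta T$, and therefore the desired relation between the optima holds.

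I expect the only genuine subtlety to lie in the first, structural step rather than in the calculation: one must justify that both optima are proportional to their respective counts of free slots with the \emph{same} constant $\mathrm{OPT}_1$. This relies on the adversary being global (so that freeness is per-slot, not per-link) and on the network being fixed throughout the window (so that the per-slot optimum is slot-independent); with these in place no union bound over links is needed, $N$ is a single binomial variable, and the Chernoff step is routine. Tracking average throughput instead of totals, the identical argument shows the stochastic optimum is at most $\tfrac{9\delta}{8}\,\mathrm{OPT}_1$ per step, matching the observation stated just before the lemma.
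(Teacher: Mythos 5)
Your proposal follows essentially the same route as the paper: a single upper-tail Chernoff bound on the number of non-jammed time steps, with the deviation parameter $\tfrac18$ and the exponent $\tfrac{(1/8)^2}{3}\delta T$ matched against the assumed lower bound on $T$. The only difference is that you spell out the structural reduction (both optima being proportional to their free-slot counts with the same per-slot optimum), which the paper leaves implicit; this is a welcome clarification, not a deviation.
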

\begin{proof}
Similar as before we consider the random variable $X^{(t)}\in\{0,1\}$ indicating
whether time step $t$ is not jammed. This way $Y = \sum_{t\in R, R\in
\mathcal{R}_v} X^{(t)}$ is the number of unjammed phases and
\[\Pr{Y \leq \left( 1 +
\frac{1}{8} \right) \Ex{Y}} \quad = \quad 1 - \Pr{Y > \left( 1 +
\frac{1}{8} \right) \Ex{Y}} \enspace .
\]
This directly yields
\[
1 - \Pr{Y > \left( 1 +
\frac{1}{8} \right) \Ex{Y}} \quad \geq \quad 1- \exp \left( -
\frac{1}{8^2\cdot 3} \delta \cdot T \right) \enspace .
\]
Setting $T\geq \frac{8^2}{3 \delta} \cdot c \cdot \ln (n)$ proves the lemma.
\end{proof}
In total, we obtain the following corollary matching the results in Section~\ref{section:Interval}.
\begin{corollary}\label{cor:stoch}
With high probability, by setting $k=\frac{2}{\delta}\cdot \ln(8)$ the algorithm
in Section~\ref{section:Interval} yields a $O(1)$-approximation after $T \in
O\left( \ln (n)\right)$ phases against an (global) stochastic adversary.
\end{corollary}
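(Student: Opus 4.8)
The plan is to show that, with phase length $k=\frac{2}{\delta}\ln 8$, the global stochastic adversary behaves after $O(\ln n)$ phases essentially like the $(T',1-\delta)$-exact adversary treated in Section~\ref{section:Interval}, so that the constant-factor guarantee there carries over up to a further constant. I keep the setup of that section unchanged: two actions per phase, $\mu=\frac{1}{2}\delta$, and the $\pm1$ utility, so that Lemma~\ref{lemma:w_interval} already provides $(1,\epsilon)$-successfulness and $1$-blocking in the deterministic case. The three preceding lemmas then port each ingredient to the stochastic setting, and it remains to verify that their hypotheses are met with the chosen $k$ and that every required phase count is $O(\ln n)$.

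First I would invoke Lemma~\ref{lemma:stoch_intlen} with $\mu/\delta=\frac12$: the chosen $k$ gives $\delta k=2\ln 8$, hence $p_z\le\exp(-\tfrac18\delta k)=8^{-1/4}$, a constant bounded away from both $0$ and $1$. Successfulness is regret-based and indifferent to whether jamming is exact or stochastic, so $(1,\epsilon)$-successfulness holds verbatim from the first part of Lemma~\ref{lemma:w_interval}. For the blocking property I would apply Lemma~\ref{lemma:stoch_intnum} with $\eta=1$: since $\eta$ and $\max\{p_z,1-p_z\}$ are constants, its threshold $T\ge\frac{\max\{p_z,1-p_z\}}{\eta^2}\cdot 8^2\cdot 3\cdot c\cdot\ln n+\ln n$ is $O(\ln n)$ phases, after which the sequence is $\frac12$-blocking with probability $1-n^{-c}$. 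Finally, Lemma~\ref{lemma:stoch_opt} bounds the stochastic optimum by $\frac98$ times the exact optimum after $\frac{8^2}{3\delta}c\ln n$ time steps, which is again $O(\ln n)$ phases because $k=\Theta(1/\delta)$.

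Putting these together, the $(1,\epsilon)$-successful and $\frac12$-blocking sequence satisfies the hypotheses of the exact-adversary analysis, so Corollary~\ref{cor:generic_exact} (equivalently Corollary~\ref{corr:exact-jammer_const}) applies with $\gamma=1$, $\eta=\frac12$, $\mu=\frac{\delta}{2}$ and $C=O(1)$, giving an $O(C\delta/(\mu\gamma\eta))=O(1)$-approximation relative to the exact optimum under the regret requirement $\epsilon<\frac{1}{4n}\gamma\eta=\frac{1}{8n}$, analogous to Section~\ref{section:Interval}. Since Lemma~\ref{lemma:stoch_opt} shows the stochastic optimum exceeds the exact optimum by at most the factor $\frac98$, the same $O(1)$ bound holds against the stochastic optimum. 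A union bound over the failure events of Lemmas~\ref{lemma:stoch_intnum} and~\ref{lemma:stoch_opt} preserves "with high probability," and taking $T$ to be the maximum of the two $O(\ln n)$ thresholds yields the claimed number of phases.

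The main obstacle is conceptual rather than computational: Lemma~\ref{lemma:stoch_intnum} is phrased for a sequence that is $\eta$-blocking \emph{against an exact adversary}, whereas the sequence here is generated while facing the stochastic adversary. I therefore have to argue that the regret argument of Lemma~\ref{lemma:w_interval} establishes the blocking inequality in expectation against the stochastic jammer (whose per-step expectation matches the exact one), and that the Chernoff concentration of Lemma~\ref{lemma:stoch_intnum} then upgrades this to the high-probability $\frac12$-blocking actually used. Care is also needed to confirm that the $\frac14\eta$-fraction of jammer-induced unsuccessful phases guaranteed for rarely-transmitting links survives the concentration with the concrete constant $p_z=8^{-1/4}$.
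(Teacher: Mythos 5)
Your overall route is the paper's: combine Lemma~\ref{lemma:stoch_intlen} (phase-level jamming probability), Lemma~\ref{lemma:w_interval} (successfulness and blocking from the regret bound), Lemma~\ref{lemma:stoch_intnum} (concentration restoring $\tfrac{\eta}{2}$-blocking after $O(\ln n)$ phases), and Lemma~\ref{lemma:stoch_opt} (stochastic vs.\ exact optimum), then feed $\gamma=1$, $\mu=\delta/2$, constant $\eta$, and $C=O(1)$ into Corollary~\ref{cor:generic_exact}, finishing with a union bound. Your arithmetic $p_z\le 8^{-1/4}$ is also the correct consequence of Lemma~\ref{lemma:stoch_intlen} for $k=\frac{2}{\delta}\ln 8$.

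However, the verification you defer in your final paragraph is exactly the load-bearing step, and it does not go through with $p_z\le 8^{-1/4}\approx 0.59$; in particular, your earlier claim that it suffices for $p_z$ to be ``a constant bounded away from both $0$ and $1$'' is false. The regret argument of Lemma~\ref{lemma:w_interval} only shows that a rarely-transmitting link sees at least a $\frac14$-fraction of \emph{unsuccessful} phases; to obtain the blocking inequality $\sum_{u}b_u(v)q_u\ge\eta/8$ one must attribute most of these phases to interference from other links rather than to the jammer alone. Against an exact adversary the jammer can never single-handedly kill a phase (it leaves $\delta k>\mu k$ unjammed steps), but against the stochastic adversary a $p_z+o(1)$ fraction of phases are jammer-killed, so the interference-caused fraction is only at least $\frac14-p_z-o(1)$. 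This is why the paper insists that the chosen $k$ makes the probability of a phase surviving the jammer exceed $7/8$, i.e.\ $p_z<1/8$; with $p_z\approx 0.59$ the bound degenerates to nothing and the dual solution in Theorem~\ref{theo:generic} is no longer feasible for links with small $q_v$. The real discrepancy is in the stated constant: Lemma~\ref{lemma:stoch_intlen} with $\mu/\delta=1/2$ gives $p_z\le\exp(-\delta k/8)$, so $p_z\le 1/8$ requires $k\ge\frac{8}{\delta}\ln 8$ rather than $\frac{2}{\delta}\ln 8$. With that (asymptotically irrelevant) enlargement of $k$ your argument closes; as written, the check you postpone would fail.
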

This corollary follows from the previous lemmas, as
$\mu=\delta / 2$ and $\eta=1$. The chosen $k$ yields the probability
$p_z> 7 / 8$ by Lemma~\ref{lemma:stoch_intlen} and in
Lemma~\ref{lemma:stoch_intnum} we use this. Together with
Lemma~\ref{lemma:stoch_opt} this yields the claim.
Applying the same arguments to the algorithm of
Section~\ref{section:delta2} yields a slightly worse bound.
% \begin{corollary}
% With high probability, the algorithm of Section~\ref{section:delta2} yields an $O(1/\delta)$-approximation after $T \in O\left(\frac{\ln (n)}{\delta}\right)$ time steps against an (individual) stochastic adversary.
% \end{corollary}
% %
% The second corollary yields a worse approximation guarantee and a worse time
% bound, but the algorithm does not need to run in phases.

In addition to the jammer being close to expectation, we require the used
algorithms to obtain low regret. For example, Randomized Weighted
Majority~\cite{LittlestoneW94} yields a sufficiently low
regret after a time polynomial in the number of links. Thus, the given
approximation factors can be achieved w.h.p.\ in polynomial time.

\section{Extensions}
\subsection{Joining and Leaving Links}
\label{section:joinleave}
Our general approach in Section~\ref{sect:general} does not require that links
join at the same time. Still, we have to assume all links stay within the
network (at least until every link experiences low regret). Here, we
relax this assumption and consider links being able to leave the network
earlier. Links are allowed to join and leave the network arbitrarily. However,
they are assumed to stay until they obtain an action sequence
in which their \emph{own} regret is low.
For this we prove convergence to an $O(\log(n)/\delta)$-approximation
against an $(T',1-\delta)$-bounded adversary.

More formally, each link comes with an interval of phases $\mathcal{R}_v$ in
which it is present in the network. In these phases it can transmit and observe
the outcome of his actions. Outside of its interval a link cannot transmit or learn.
The following theorem adjusts our general approach for this more general
case.

\begin{theorem}\label{theo:joinandleave}
Suppose an algorithm computes an action sequence which is $\eta$-blocking,
$(\gamma, \epsilon)$-successful with $\epsilon < \frac{1}{4n} \gamma \eta$, and has at least $\mu$ successful time steps in each phase 
considered successful in a $C$-independent conflict graph. Against an (individual) $(T',1-\delta)$-bounded adversary the average throughput of the computed action sequence yields an approximation factor of
\[ O\left( \left( \log n + \log\left(\frac{1}{\eta}\right)\right)\frac{C}{\mu
\cdot \gamma \cdot \eta}\right) \enspace .
\]
\end{theorem}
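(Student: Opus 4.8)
The plan is to reuse the primal--dual skeleton of Theorem~\ref{theo:generic} essentially unchanged on the primal side and to concentrate all the new work on rebuilding a feasible \emph{dual}. The primal is again the per-time-step scheduling LP, whose value summed over the horizon (respecting that a link contributes only on its interval $\mathcal{R}_v$) is the quantity to be bounded; as before, taking $x_v$ to be the time-averaged indicator of membership in $OPT'_t$ yields a feasible primal point by $C$-independence. The $(\gamma,\epsilon)$-successfulness hypothesis is a per-link statement over $\mathcal{R}_v$, so it transfers verbatim: it still gives $q_v \le \frac{1}{\gamma}(2w_v+\epsilon)$ for every link, and a successful phase still contains at least $\mu$ successful steps, so the algorithm's throughput stays within a factor $\mu$ of $\sum_v w_v|\mathcal{R}_v|$. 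Everything therefore reduces to producing dual variables whose objective is $O(\log n + \log(1/\eta))$ times the static estimate.

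First I would pin down exactly where the static construction breaks. In Theorem~\ref{theo:generic} the dual sets $y_u \propto q_u/\eta$ and verifies the constraint of a rarely-transmitting link $v$ from $\sum_u b_u(v) q_u \ge \frac18\eta$. With join/leave, the $\eta$-blocking property only certifies the \emph{local} statement $\sum_u b_u(v) q^v_u \ge \frac18\eta$, where $q^v_u$ is the fraction of $\mathcal{R}_v$ during which $u$ transmits. The two quantities agree up to constants when the interferer's interval is no longer than $\mathcal{R}_v$ (there $q_u \ge q^v_u\,|\mathcal{R}_v|/|\mathcal{R}_u|$ loses nothing), but they diverge arbitrarily when a long-lived interferer concentrates its transmissions inside the short window $\mathcal{R}_v$: then $q^v_u$ is large while the global $q_u$ can be tiny. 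Thus $y_u\propto q_u$ is no longer dual-feasible, and this concentration phenomenon is the sole obstacle.

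To repair it I would lay a laminar family of dyadic windows over the phase horizon and create a dual variable $y^I_u$ for each interferer $u$ and dyadic window $I$, proportional to $u$'s transmit fraction measured \emph{inside} $I$. A link $v$ then collects against the window $I$ that tightly covers $\mathcal{R}_v$ (so $|I|\le 4|\mathcal{R}_v|$ and the density it sees is reproduced up to a constant), which restores feasibility of its constraint. To bound the objective I would first discard insignificant interferers: since the affectance weights satisfy $b_u(v)\le 1$ and at most $n$ links contribute, dropping every $u$ with $q^v_u < \frac{\eta}{16n}$ removes at most $\frac{1}{16}\eta$ of mass and still leaves $\sum_u b_u(v) q^v_u \ge \frac{1}{16}\eta$. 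Every surviving interferer then has $\frac{\eta}{16n}\le q^v_u\le 1$, so across dyadic scales the windows that can carry charge to a fixed $v$ span only $O(\log(n/\eta)) = O(\log n + \log(1/\eta))$ levels. Summing the contributions over these levels, while accounting for the reuse of a single long-lived transmission across the nested windows containing it, inflates the static bound $O(C/(\mu\gamma\eta))$ by exactly this logarithmic factor; the slack $\epsilon < \frac{1}{4n}\gamma\eta$ absorbs the discarded mass just as in Theorem~\ref{theo:generic}, and the exact/individual-adversary accounting is inherited unchanged.

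The step I expect to be genuinely delicate is this last one: bounding the aggregated objective $\sum_{u,I} C\,y^I_u$. One must control how often a fixed transmission of a long-lived interferer is counted---once per dyadic level whose window contains it---and show that restricting to the $O(\log n + \log(1/\eta))$ charge-bearing levels keeps the total within a logarithmic factor of $\sum_u q_u|\mathcal{R}_u|$, which successfulness ties back to the throughput. Reconciling these per-window local interference measurements with a single globally bounded dual objective is precisely what forces the extra $O(\log n+\log(1/\eta))$ factor and is where the bulk of the care is required; the remaining ingredients are direct transcriptions of the template.
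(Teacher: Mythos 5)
Your proposal is correct and follows essentially the same route as the paper: a per-time-step primal LP, a multi-scale dual built from local transmit densities at $O(\log n + \log(1/\eta))$ dyadic scales, the same $\eta/(16n)$ threshold for discarding interferers with negligible overlap, and the logarithmic factor paid by replacing the per-scale maximum with a sum over scales in the dual objective. The only cosmetic difference is that the paper's dual variables use windows \emph{centered} at the current phase index (a sliding-window maximum over scales) rather than a fixed laminar family, which sidesteps the alignment issue of $\mathcal{R}_v$ straddling a dyadic boundary; otherwise the two arguments coincide.
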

To prove this, we use a similar primal-dual approach as in
Section~\ref{sect:general}. Here, we have to use a more complex LP introducing
additional factors in the approximation guarantee.
\begin{proof}
We consider similar LPs as in the proof of Theorem~\ref{theo:generic}, but here we do not average over all time steps when constructing the LPs. As a primal LP we get
\[
\begin{array}[4]{rrll}
\text{Max.} & \multicolumn{3}{l}{\D \sum_{v\in V} \sum_{\substack{t\in T_v\\v\in
OPT'_t}} x_{v,t}\vspace{0.4cm}}\\
\text{s.t.} & \D \sum_{v\in OPT'_t} b_u(v)x_{v,t} & \leq C & \forall u\in
V, t\in T\vspace{0.2cm}\\
& x_{v,t} & \leq 1 & \forall t\in T, v\in OPT'_t\vspace{0.2cm}\\
 & x_{v,t} & \geq 0 & \forall t\in T, v\in OPT'_t
\end{array}
\]
We consider a variable $x_{v,t}$ only to be existing for time steps in which
$v\in OPT'_t$ and we set $x_{v,t} = 1$ if $v\in OPT'_t$.
This solution is feasible as in every time step the optimum is $C$-independent.
Constructing the dual yields
\[
\begin{array}[4]{rrll}
\text{Min.}  & \multicolumn{3}{l}{\D \sum_{v\in V} \sum_{t\in T}C \cdot y_{v,t}
+ \sum_{t\in T}\sum_{v\in OPT'_t} z_{v,t}\vspace{0.4cm}}\\

\text{s.t.} & \D \sum_{u\in V} b_u(v)y_{u,t} + z_{u,t} & \geq  1 & \forall v\in
V, t\in T\text{ with }u\in OPT'_t\vspace{0.2cm}\\
 & y_{v,t}, z_{v,t} & \geq 0&\forall v\in V, t\in T\text{ with }v\in OPT'_t
\end{array}
\]

To construct a dual solution we need more detailed considerations. Let the phases of any link $\ell_v$ be numbered such that $R_v^{(i)}$ denotes the $i$-th phase of link $\ell_v$. Let $i_v(t)$ be the number of the phase with $t\in R_v^{(i_v(t))}$. Using
\[
\mathcal{J}_v = \{\lceil\log \lvert
\mathcal{I}_v \rvert \rceil - \log 16n + \log \eta, \ldots, \lceil \log \lvert  \mathcal{I}_v
\rvert \rceil+ \log 16n\}\enspace,
\]
we construct the solution for the dual LP by setting
\[
z_{v, t} = 4 \cdot \frac{1}{\eta} \cdot \max_{j \in \mathcal{J}_v} \frac{1}{2^j} \sum_{i' = i_v(t) - 2^j}^{i_v(t)+2^j} Q_v^{R_v^{(i')}}
\]
\[
y_{v, t} = 32 \cdot \frac{1}{\eta} \cdot \max_{j \in \mathcal{J}_v} \frac{1}{2^j} \sum_{i' = i_v(t) - 2^j}^{i_v(t)+2^j} Q_v^{R_v^{(i')}} \enspace .
\]
To show that the solution is feasible, we only have to consider the case of $z_{v,t}\leq \frac{1}{4}$ as in the other case the constraint is easily seen to be fulfilled. Consider some $v\in V$ and $t$ such that $v \in OPT'_t$. We know that if $z_{v,t} \leq \frac{1}{4}$ we have 
\[
\frac{1}{2^j} \sum_{i' =
i_v(t) - 2^j}^{i_v(t)+2^j} Q_v^{R_v^{(i')}} \leq \frac{1}{4} \eta\enspace .
\]
For $j = \lceil\log \lvert \mathcal{R}_v \rvert \rceil$ this yields
\[
\frac{1}{2 \cdot \lvert  \mathcal{R}_v \rvert} \sum_{i' =
i_v(t) - \lvert  \mathcal{R}_v \rvert}^{i_v(t)+\lvert  \mathcal{R}_v \rvert}
Q_v^{R_v^{(i')}} \leq \frac{1}{4}  \eta\enspace ,
\]
which yields
\[
\frac{1}{\lvert  \mathcal{R}_v \rvert} \sum_{i' =
i_v(t) - \lvert  \mathcal{R}_v \rvert}^{i_v(t)+\lvert  \mathcal{R}_v \rvert}
Q_v^{R_v^{(i')}} \leq \frac{1}{2}  \eta\enspace .
\]
As the left side of this inequality is at least the number of phases in which $v$ chooses to transmit, we can bound this by $\frac{1}{2}\eta$. Let $R_u \cap \mathcal{R}_v$ denote the set of time steps that phase $R_u$ of link $\ell_u$ and all phases of $\ell_v$ share. In the same way $t \in \mathcal{R}_v$ is a time step in a phase of $\ell_v$. Using $\eta$-blocking we conclude
\[
\frac{1}{\lvert  \mathcal{R}_v \rvert} \sum_u \sum_{\substack{R_u \in
\mathcal{R}_u\\R_u \cap \mathcal{R}_v \neq \emptyset}} b_u(v) Q_u^{R_u^{(i')}}
\geq \frac{1}{8} \eta \enspace,
\]
or after reordering the sums
\[
\sum_u b_u(v) \frac{1}{\lvert  \mathcal{R}_v \rvert} \sum_{\substack{R_u \in
\mathcal{R}_u\\R_u \cap \mathcal{R}_v \neq \emptyset}} Q_u^{R_u^{(i')}} \geq
\frac{1}{8} \eta \enspace.
\] We will now drop from the consideration all links $\ell_u$ with which
$\ell_v$ shares only few phases. Those phases can constitute only a minor part of the
interference. Considering
\[
\sum_{\substack{u\in V \\ \lvert  \mathcal{R}_v \cap  \mathcal{R}_u\rvert <
\frac{1}{16 n} \eta \lvert  \mathcal{R}_v \rvert}} b_u(v)
\frac{1}{\lvert \mathcal{R}_v \rvert} \sum_{\substack{R_u \in \mathcal{R}_u\\R_u
\cap \mathcal{R}_v \neq \emptyset}} Q_u^{R_u^{(i')}}
\]
this can be bound by
\[
\sum_{\substack{u\in V \\ \lvert  \mathcal{R}_v \cap  \mathcal{R}_u\rvert <
\frac{1}{16 n} \eta \lvert  \mathcal{R}_v \rvert}} b_u(v)
\frac{1}{\lvert \mathcal{R}_v \rvert}  \lvert  \mathcal{R}_v \cap 
\mathcal{R}_u\rvert
\leq
n\cdot \frac{1}{16 n} \eta \quad = \quad \frac{1}{16} \eta
 \enspace.
\]
Therefore, we have
\[
\sum_{\substack{u\in V \\ \lvert  \mathcal{R}_v \cap  \mathcal{R}_u\rvert \geq
\frac{1}{16 n} \eta \lvert  \mathcal{R}_v \rvert}} b_u(v) \frac{1}{\lvert 
\mathcal{R}_v \rvert} \sum_{\substack{R_u \in \mathcal{R}_u\\R_u \cap
\mathcal{R}_v \neq \emptyset}} Q_u^{R_u^{(i')}} \quad \geq \quad \frac{1}{16} \eta \enspace.
\]
We can assume $|\mathcal{R}_v| \geq \frac{1}{16 n} \max_{u\in V} |\mathcal{R}_u|$ for
all links $\ell_v\in OPT'_t$ for any $t\in T$ without loss of
generality. This yields
\[
\sum_{\substack{u\in V \\ \frac{1}{16 n} \eta \lvert 
\mathcal{R}_v \rvert \leq \lvert  \mathcal{R}_u \rvert \leq  16 n\lvert 
\mathcal{R}_v \rvert    }} b_u(v) \frac{1}{\lvert \mathcal{R}_v \rvert}
\sum_{\substack{R_u \in \mathcal{R}_u\\R_u \cap \mathcal{R}_v \neq \emptyset}}
Q_u^{R_u^{(i')}} \quad \geq \quad \frac{1}{16} \eta \enspace.
\]
Let $j$ be such that $2^{j-1} \leq \lvert  \mathcal{R}_v \rvert \leq 2^j$. Then
we have for all $u\in V$ with $\frac{1}{16 n} \eta \lvert 
\mathcal{R}_v \rvert \leq \lvert  \mathcal{R}_u \rvert \leq  16 n\lvert 
\mathcal{R}_v \rvert$ and for $t\in \mathcal{R}_v$ that
\begin{align*}
y_{u,t} \quad &\geq \quad 32  \frac{1}{\eta}\cdot \frac{1}{2^j} \sum_{i' =
i_u(t) - 2^j}^{i_u(t)+2^j} Q_u^{R_u^{(i')}} \quad \geq \quad 32 \frac{1}{\eta}\cdot
\frac{1}{2^j} \sum_{\substack{R_u \in \mathcal{R}_u\\R_u \cap \mathcal{R}_v \neq
\emptyset}} Q_u^{R_u^{(i')}}\\
&\geq \quad 32  \frac{1}{\eta} \cdot \frac{1}{2\cdot \lvert 
\mathcal{R}_v \rvert} \sum_{\substack{R_u \in \mathcal{R}_u\\R_u \cap
\mathcal{R}_v \neq \emptyset}} Q_u^{R_u^{(i')}}
\enspace .
\end{align*}
We combine this with the experienced interference of link $\ell_v$, that is
\[
\sum_{u\in V} b_u(v) y_{u,t} \quad \geq \quad \sum_{u\in V} b_u(v) 32 \cdot
\frac{1}{2\cdot \eta \cdot \lvert 
\mathcal{R}_v \rvert} \sum_{\substack{I_u \in \mathcal{R}_u\\R_u \cap
\mathcal{R}_v \neq \emptyset}} Q_u^{R_u^{(i')}} \quad \geq \quad 1 \enspace .
\]
This way the constraint is fulfilled which shows that we constructed a feasible solution for the dual LP.

Considering the objective function of the dual LP now, we have
\begin{align*}
\sum_{v \in V} \sum_{t} (C y_{v, t} + z_{v, t}) 
& \leq 36 \cdot
\frac{C}{\eta}\sum_{v \in V} \sum_{t} \max_{j \in \mathcal{J}_v} \frac{1}{2^j} \sum_{i' = i_v(t) - 2^j}^{i_v(t)+2^j}
Q_v^{R_v^{(i')}} \\
& \leq 36 \cdot \frac{C}{\eta} \sum_{v \in V} \sum_{j = \lceil\log \lvert \mathcal{R}_v
\rvert \rceil - \log 16n + \log \eta}^{\lceil\log \lvert \mathcal{R}_v
\rvert \rceil + \log 16n} \frac{1}{2^j} \sum_{t} \sum_{i' = i_v(t) - 2^j}^{i_v(t)+2^j}
Q_v^{R_v^{(i')}} \\
& = 36 \cdot \frac{C}{\eta} \sum_{v \in V} \sum_{j = \lceil\log \lvert \mathcal{R}_v
\rvert \rceil - \log 16n + \log \eta}^{\lceil\log \lvert \mathcal{R}_v
\rvert \rceil + \log 16n} \frac{1}{2^j} \sum_{t} (2 \cdot 2^j + 1)
Q_v^{R_v^{(i')}} \\
& = O(\log n - \log \eta) \cdot \frac{C}{\eta} \sum_{v \in V} \sum_{t}
Q_v^{R_v^{(i(t))}} \\
& = O(\log n - \log \eta) \cdot \frac{C}{\eta} \sum_{v \in V} \sum_{R_v \in
\mathcal{R}_v} Q_v^{R_v}\cdot k \enspace .
\end{align*}
Using $(\gamma, \epsilon)$-successfulness we get 
\[
O\left(\left(\log n + \log \frac{1}{\eta}\right) \frac{C}{\eta} \right) \cdot \sum_{v \in V} \sum_t \frac{2}{\gamma \mu} w_v^{(t)} 
\]
as an upper bound on the objective value. The comparision of this to the objective function of the primal LP yields
\[
\sum_{v\in V} \sum_{\substack{t\in \mathcal{R}_v\\v\in OPT'_t}} x_{v,t} 
\quad = \quad \sum_{t\in T} \lvert OPT'_t \rvert
\quad \leq \quad O\left( \left( \log n + \log\left(\frac{1}{\eta}\right)\right)\frac{C}{\mu \cdot \gamma \cdot \eta}\right) \cdot \sum_{v \in V} \sum_t w_v^{(t)} \enspace, 
\]
which concludes the proof. 
\end{proof}

This theorem allows to transfer all approximation guarantees for all settings
analyzed previously in this paper to the case where links are allowed to join
and leave the network. This increases the guarantees by a factor of $O\left(\log
n + \log\frac{1}{\eta}\right)$. 

In particular, Theorem~\ref{theo:joinandleave}
also implies that without adversaries, we can use no-regret learning techniques
to yield an $O(\log n)$-approximation guarantee. 
% This is due to setting
% $\delta=1$ and $T'=1$.
% 
% \begin{corollary}\label{corr:no_jammer_interval}
% For capacity maximization without adversary, where each link is only present some restricted time interval, every sequence with average regret $\epsilon<\frac{1}{4 n}$ for all links yields an $O(C \log n)$-approximation.
% \end{corollary}

\subsection{Multiple Receivers}\label{section:multiple}
In this section we extend the previous results to a multi-receiver setting, in which each sender strives to establish a simultaneous transmission to multiple receivers. In this case, we are given $n$ senders $s_v$ and for each sender a set of one or more receivers $r_{v,i}$. There are several ways to define a successful transmission in this case. We will distinguish three settings.

\begin{description}
\item[To-all:]\ \\ A transmission for link $\ell_v$ is successful iff all of its
receivers are conflict-free.
\item[To-one:]\ \\ A transmission for link $\ell_v$ is successful iff at least
one receiver is conflict-free.
\item[To-many:]\ \\ The utility of a link is linear in the number of receivers
that are conflict-free.
\end{description}

These three settings yield different global objectives for the network. In
the to-one setting the objective becomes to maximize transmissions of links to
at least one of their receivers, while in the to-all setting we maximize the
transmissions of links that reach all their receivers. The to-many setting is
receiver-based, the goal is to maximize the number of successful transmissions
at the receivers.

In the to-one and the to-all settings the utility function from the
single-receiver setting can be transferred. We show similar results by observing
that an algorithm being $(\gamma, \epsilon)$-successful and $\eta$-blocking in
the single receiver setting is also $(\gamma, \epsilon)$-successful and
$\eta$-blocking in the to-one and to-all settings. In contrast, the to-many
setting does not allow such a conclusion.

In a to-one setting the success of transmissions in different time steps can be
due to different receivers being conflict-free. In contrast, in the to-all
setting the failure of a transmission can be due to different receivers. Due to
this fact we need to consider in each time step a different conflict graph.
There exists a \emph{single-receiver conflict graph} for every possible combination of
senders to one of their receivers. This idea directly results in the definition
of \emph{multi-receiver $C$-independence}. Every interference model yielding
$C$-independence in single-receiver settings does so also in the multi-receiver
setting.
\begin{definition}
A multi-receiver setting is \emph{$C$-independent} if every conflict graph
resulting from the combination of every sender with one of its receivers is
$C$-independent.
\end{definition}
For the to-one and the to-all settings, we just redefine under which conditions
a single transmission attempt is considered successful or unsuccessful. Then
utilities and learning algorithms from previous sections can be used without
modification. Note that the factor $\mu$ does not change as it is inherent in
the construction of the algorithm.
\begin{proposition}
An algorithm the computes a sequence of action vectors that is $(\gamma,\epsilon)$-successful in a single-receiver setting also computes a sequence that is $(\gamma,\epsilon)$-successful in to-one and to-all multi-receiver settings.
\end{proposition}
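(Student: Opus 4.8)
The plan is to unpack the definition of $(\gamma,\epsilon)$-successfulness and observe that the quantities $q_v$ and $w_v$ it refers to are defined purely in terms of the sequence of actions (transmit/not transmit per phase) together with the classification of each phase as successful or unsuccessful. The only thing that changes between the single-receiver setting and the to-one/to-all multi-receiver settings is the \emph{rule} by which a phase is declared successful; the action sequence itself, the regret bound $\epsilon$, and the utility function are all inherited unchanged, since by assumption we use the same learning algorithm with the same utility. So the real content is that the regret argument establishing Definition~\ref{def:success} never used any property of the success rule beyond the fact that ``successful'' is a well-defined binary label on phases.

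Concretely, I would recall from the proof of Lemma~\ref{lemma:w_interval} the core step: comparing against the fixed strategy ``never transmit'' (which yields utility $0$) gives regret at most $\epsilon$, hence $(q_v - w_v) - w_v \le \epsilon$, i.e.\ $q_v \le 2w_v + \epsilon$, which is exactly $(1,\epsilon)$-successfulness and generalizes to $(\gamma,\epsilon)$-successfulness via the scaled utility in Lemma~\ref{lemma:w_square}. The point to make is that in this inequality $w_v$ denotes the fraction of \emph{successful} phases, and the derivation only uses that the utility is $+1$ (resp.\ a positive scaled reward) on successful transmitting phases and $-1$ (resp.\ a negative penalty) on unsuccessful transmitting phases and $0$ on non-transmitting phases. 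Whether a phase counts as successful because \emph{all} receivers are conflict-free (to-all) or because \emph{at least one} is (to-one) only reassigns the labels $W_v^R \in \{0,1\}$; it does not alter the structure of the regret comparison. Thus the same chain of inequalities goes through verbatim with $w_v$ reinterpreted as the fraction of phases meeting the relevant multi-receiver success criterion, yielding $(\gamma,\epsilon)$-successfulness in both settings.

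I would therefore structure the proof as: (i) note that the action sequence and the regret bound are identical across settings because the algorithm and utility are unchanged; (ii) observe that $q_v$ is defined independently of the success rule, while $w_v$ is the fraction of phases labelled successful under whichever rule is in force; (iii) replay the ``compare to never transmitting'' regret inequality, emphasizing that it is agnostic to the meaning of the success label, to conclude $\frac{1}{\gamma}(2w_v + \epsilon) \ge q_v$. The main obstacle, though a mild one, is being careful that the \emph{utility} a link receives is still correctly triggered by the new success rule---that is, that the feedback ``successful phase'' fed to the no-regret algorithm in the to-one/to-all setting is precisely the binary label appearing in $w_v$, so that the regret guarantee and the counted quantity genuinely coincide. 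Once that alignment is spelled out, the inequality is immediate and no new estimate is needed; the substantive modeling work (why the correct conflict graph is $C$-independent, and why the to-many case fails) lives in the surrounding discussion and the accompanying $\eta$-blocking argument rather than in this proposition.
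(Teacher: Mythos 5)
Your argument is correct and is essentially the paper's own: the paper justifies the proposition by noting that the utility functions are unchanged and the successfulness proofs rely only on the regret being below $\epsilon$, with the success label being an interchangeable binary classification of phases. You simply spell out this same observation in more detail, including the key alignment point that the feedback fed to the learner must coincide with the label counted in $w_v$.
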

This result is straightforward, as the utility functions stay the same and our
proofs only rely on the property that the regret is below $\epsilon$. The
definition of success in a specific phase is independent of this property.

For the $\eta$-blocking property we need additional considerations. By assuming
that the setting is multi-receiver $C$-independent, we will construct weights
for a conflict graph that is $C$-independent in the single-receiver sense and
ensures the $\eta$-blocking property.

\begin{lemma}
There exists a $C$-independent conflict graph such that every algorithm that computes an $\eta$-blocking sequence of actions in a single-receiver setting also computes an $\eta$-blocking sequence in the to-one and to-all multi-receiver settings.
\end{lemma}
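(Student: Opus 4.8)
The plan is to exhibit, for each of the to-one and to-all multi-receiver settings, an explicit single-receiver conflict graph whose edge weights $b_u(v)$ dominate the true multi-receiver interference experienced by $\ell_v$, so that the $\eta$-blocking property proved in the single-receiver analysis transfers verbatim. Recall that $\eta$-blocking (Definition~\ref{def:blocking}) makes two claims about a link $\ell_v$ that transmits rarely ($q_v \le \frac14 \eta$): first, that a $\frac14\eta$-fraction of phases are unsuccessful due to other links, and second, that $\sum_{u\in V} b_u(v) q_u \ge \frac18 \eta$. The first claim depends only on the \emph{definition} of success in a phase and on the no-regret/utility machinery, which is unchanged; the second claim is where the conflict graph enters, and it is exactly this inequality that I must re-establish with a suitable choice of weights.

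For the \textbf{to-all} setting, a transmission of $\ell_v$ fails whenever \emph{some} receiver $r_{v,i}$ is not conflict-free. Here I would fix, for each time step (or phase), the receiver responsible for the failure and charge the interference to the single-receiver conflict graph associated with the combination mapping each sender to that particular receiver. Since the setting is multi-receiver $C$-independent, every such single-receiver conflict graph is itself $C$-independent, so I can take $b_u(v)$ to be (a bound on) the affectance in whichever single-receiver graph witnesses the failure. The key point is that the argument in Lemma~\ref{lemma:w_interval} deriving $\sum_{u} b_u(v) q_u \ge \frac18\eta$ from $f_v \ge \frac14\eta$ only uses that each unsuccessful phase forces at least unit total incoming weight from transmitting interferers; as long as \emph{some} single-receiver graph supplies this weight, the bound survives. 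For the \textbf{to-one} setting, $\ell_v$ fails only when \emph{every} receiver is blocked, so I would instead define $b_u(v) = \min_i a(u, r_{v,i})$, the smallest affectance over $\ell_v$'s receivers, capturing that to block $\ell_v$ an interferer must block its ``easiest'' receiver. The same counting then yields the desired $\frac18\eta$ lower bound.

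The main obstacle is verifying that the chosen weights simultaneously (a) are large enough to recover the $\frac18\eta$ interference bound in every unsuccessful phase, and (b) remain small enough that the resulting conflict graph is still $C$-independent, i.e.\ that for any feasible set there is a constant fraction whose total weight toward any $v$ is $O(C)$. In the to-all case both hold because each constituent single-receiver graph is $C$-independent by hypothesis and we only ever use one such graph per failure; in the to-one case the $\min$ over receivers can only decrease weights, so $C$-independence is inherited from any single fixed receiver assignment, while the interference bound requires care since a rarely-transmitting $\ell_v$ must now be blocked at \emph{all} receivers simultaneously. I would handle this by noting that if $\ell_v$'s receivers were all conflict-free in a $\frac14\eta$-fraction of phases the utility argument would have driven $\ell_v$ to transmit more, contradicting $q_v \le \frac14\eta$; hence the $\min$-based weight correctly accounts for the binding constraint. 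Once the weights are in place, the transfer of $\eta$-blocking is immediate, and combined with the preceding proposition on $(\gamma,\epsilon)$-successfulness, Theorem~\ref{theo:generic} applies unchanged to both multi-receiver objectives.
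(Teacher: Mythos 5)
Your to-all argument is essentially the paper's: charge each failure of $\ell_v$ to a receiver that witnesses it, use the corresponding single-receiver conflict graph in that time step, and then combine these into one graph. The one detail you gloss over is that $\eta$-blocking and the LP in Theorem~\ref{theo:generic} require a \emph{single} set of weights $b_u(v)$, not a per-step family; the paper handles this by setting the weight in successful steps to the average over unsuccessful ones and then time-averaging, so that the averaged graph still certifies unit incoming weight per unsuccessful step and remains $C$-independent. That is a repairable omission. The genuine gap is in your to-one construction. Setting $b_u(v)=\min_i a(u,r_{v,i})$ does not preserve the property that every phase unsuccessful due to interference carries total incoming weight at least $1$, which is exactly what the counting argument behind $\sum_{u} b_u(v) q_u \ge \frac{1}{8}\eta$ needs. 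Concretely, take two receivers and two interferers, where interferer $1$ has affectance $1$ on $r_{v,1}$ and $0$ on $r_{v,2}$, and interferer $2$ has affectance $0$ on $r_{v,1}$ and $1$ on $r_{v,2}$: both receivers are blocked, so the to-one transmission fails, yet $\sum_u \min_i a(u,r_{v,i})=0$. Your appeal to the utility argument (that $\ell_v$ would transmit more if its receivers were often all conflict-free) only re-derives $f_v\ge\frac{1}{4}\eta$; it does not convert blocked phases into incoming weight in the min-graph, which is the second, and here the problematic, half of Definition~\ref{def:blocking}.

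The paper's fix for to-one goes in the opposite direction from taking a minimum: fix a single receiver per link once and for all --- for $\ell_v$ in the optimum, a receiver that is conflict-free there, and an arbitrary one otherwise --- and use the single-receiver conflict graph induced by that assignment. A to-one failure means \emph{all} receivers are blocked, hence in particular the fixed one, so each unsuccessful phase still carries incoming weight at least $1$ in that graph, and the counting argument transfers verbatim. $C$-independence is inherited because this graph is one of the single-receiver graphs assumed $C$-independent in the multi-receiver sense, and the optimum remains feasible precisely because its designated receivers are conflict-free. You should swap your to-one construction for this fixed-receiver one; the rest of your outline then goes through.
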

\begin{proof}
We consider a specific conflict graph in each of the time steps. We will denote the corresponding weights by $b_u^{(t)}(v)$. Averaging over all steps $t\in T$ yields $\overline{b_u}(v)= \frac{1}{T} \sum_{t\in T}b_u^{(t)}(v)$. In every time step the conflict graph is $C$-independent, so an average conflict graph with averaged weights is also $C$-independent.

We choose the conflict graph weights in each of the settings by basically averaging over different
single-receiver conflict-graphs. For the \emph{to-one setting} we choose weights $b_u^{(t)}(v)$
depending on which receiver is successful in the optimum. Note that this choice is independent of the time step $t$ and, hence, $\overline{b_u}(v)=b_u^{(t)}(v)$. For each link $\ell_v \in OPT$ we choose an arbitrary conflict-free receiver. For links $\ell_v\not\in OPT$ we choose an arbitrary receiver. In an unsuccessful transmission of the algorithm the transmission to this receiver is
also unsuccessful.

Thus, for any unsuccessful transmission due to interference from other links in a time step $t$ it holds $\sum_{u\text{ transmitting in }t} \overline{b_u}(v) \geq 1$. The algorithm is $\eta$-blocking for the single-receiver setting, and hence for our choice of conflict-graph weights it holds
\[
\sum_t \sum_{u\text{ transmitting in }t} b_u^{(t)}(v) \geq \frac{1}{8}\eta \enspace .
\]
We directly obtain the $\eta$-blocking property.

For the \emph{to-all setting} we choose weights depending on which receiver is in conflict as follows. If the sender of $\ell_v$ is not received by receiver $r_{v,i}$ in time step $t$, we use the pair $(s_v, r_{v,i})$ to construct $b_u^{(t)}(v)$. If multiple receivers do not receive the transmission we choose an arbitrary one. For the time steps in which $\ell_v$ is successful we set $b_u^{(t)}(v)$ to the average $b_u^{(t')}(v)$ of all unsuccessful time steps $t'$. This way the average $b_u^{(t)}(v)$ over all time steps is the same as over unsuccessful time steps. For a sender that is always successful, we choose an arbitrary of its receivers.

Note that $C$-independence also holds for this conflict graph. For any feasible set of links $L$ we can construct $L'$, and it holds $\sum_{v\in L'_t} \overline{b_u}(v) \leq \sum_{v\in L'_t} \max_{r_{v,i}} b_u^{(t)}(v)$ for all $u\in V$. As $C$-independence is fulfilled for every receiver of $\ell_v$, it also holds for the receivers with $\max_{r_{v,i}} b_u^{(t)}(v)$. This implies $\sum_{v\in L'_t} \max_{r_{v,i}} b_u^{(t)}(v)\leq C$ for all $u \in V$ in the constructed conflict graph.

In a time step that is unsuccessful due to other links, we have again $\sum_u b_u^{(t)}(v) > 1$. Thus, by $\eta$-blocking, $f_v\geq \frac{1}{4} \eta$ and summing over unsuccessful time steps yields
\[
\sum_{\substack{t\in T\\ \ell_v\text{ unsuccessful}}}
\sum_{u\text{ transmits in }t}  b_u^{(t)}(v) \geq \frac{1}{8}\eta T \enspace
.
\]
As $\overline{b_u}(v)$ is the average of $b_u^{(t)}(v)$ in steps unsuccessful for $\ell_v$ it holds
\[
\sum_{\substack{t\in T\\ \ell_v\text{ unsuccessful}}}
\sum_{u\text{ transmits in }t}  \overline{b_u}(v) \geq \frac{1}{8}\eta T
\enspace .
\]
By rearrranging the sums, $\sum_u \overline{b_u}(v) \cdot \lvert \left\{t \in T \growingmid u
\text{ transmits in }t \right\}\rvert \geq \frac{1}{8}\eta T$. This proves the
lemma.
\end{proof}
Now our proofs from the previos sections can be adjusted without any further loss.
\begin{corollary}
Theorems~\ref{theo:approx_interval} to~\ref{theo:joinandleave} and Corollaries~\ref{corr:exact-jammer_const} and~\ref{cor:stoch} yield the same results in the respective multi-receiver to-one and to-all settings.
\end{corollary}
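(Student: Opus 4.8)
The plan is to observe that every one of the cited results is, in the end, an instantiation of the general template: each theorem or corollary is obtained by producing an action sequence that is $(\gamma,\epsilon)$-successful and $\eta$-blocking for suitable values of $\gamma$, $\epsilon$, $\eta$, and $\mu$, and then invoking Theorem~\ref{theo:generic}, Corollary~\ref{cor:generic_exact}, or Theorem~\ref{theo:joinandleave}. The approximation factors produced by these template results depend only on the parameters $C$, $\mu$, $\gamma$, $\eta$ (together with the adversary parameter $\delta$ and, in the join/leave case, the $\log n + \log(1/\eta)$ overhead). Hence it suffices to show that, in the to-one and to-all settings, one can exhibit an action sequence with identical values of these four key parameters with respect to \emph{some} $C$-independent conflict graph.

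The two ingredients for this have just been established. The preceding Proposition shows that $(\gamma,\epsilon)$-successfulness transfers verbatim, because this property only constrains the relation between attempted and successful phases via the regret bound $\epsilon$ and the definition of phase success, neither of which references the conflict graph. The preceding Lemma supplies, for each of the to-one and to-all settings, a $C$-independent conflict graph---the time-averaged graph $\overline{b_u}(v)$---with respect to which the same sequence is $\eta$-blocking. Since the phase length and the success threshold $\mu$ are inherent to the algorithm and do not change when we reinterpret ``success'' in the multi-receiver sense, the value of $\mu$ is likewise preserved.

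With both key properties in hand for the averaged conflict graph, I would then re-run the primal-dual argument of Theorem~\ref{theo:generic} (and of Theorem~\ref{theo:joinandleave} for the join/leave extension) using $\overline{b}$ in place of $b$. The only point requiring care is that the primal LP must still encode the multi-receiver optimum: for the to-one setting the graph weights were chosen so that each $\ell_v\in OPT$ is conflict-free at its designated receiver, and for the to-all setting the weights were averaged over the steps in which $\ell_v$ fails, so that $OPT'$ remains a feasible primal solution and $C$-independence of $\overline{b}$ gives the constraint $\sum_v \overline{b_u}(v) x_v \le C$. Once feasibility of the primal and dual constructions is verified, the objective-value comparison proceeds exactly as before, and the resulting approximation factor is the same expression in $C$, $\mu$, $\gamma$, $\eta$ (and $\delta$, $\log n$) as in the single-receiver statements. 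The stochastic results (Corollary~\ref{cor:stoch}) follow identically, since Lemmas~\ref{lemma:stoch_intlen}--\ref{lemma:stoch_opt} use only the blocking property and the jammer's independence across time steps, both unaffected by the receiver reinterpretation.

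I expect the main obstacle to be precisely the reconciliation of the template's single fixed conflict graph with the time-varying graphs $b_u^{(t)}(v)$ needed in the to-all case: one must confirm that averaging over unsuccessful steps neither weakens the $\Omega(|OPT|)$-size guarantee underlying $C$-independence nor the lower bound $\sum_u \overline{b_u}(v) q_u \ge \tfrac18\eta$ that drives dual feasibility. Since the preceding Lemma already discharges both of these, the remaining work is bookkeeping rather than a new idea, and the corollary follows by applying each template result unchanged.
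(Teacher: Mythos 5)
Your proposal is correct and follows essentially the same route as the paper: the corollary is obtained by combining the preceding proposition (transfer of $(\gamma,\epsilon)$-successfulness) and the preceding lemma (transfer of $\eta$-blocking via the averaged $C$-independent conflict graph), noting that $\mu$ is fixed by the algorithm's construction, and then re-invoking the generic template theorems. The paper leaves this combination implicit ("our proofs from the previous sections can be adjusted without any further loss"), and your additional remarks about the to-all averaging and the stochastic lemmas are consistent with what the lemma already establishes.
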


The to-many setting, where utility depends on the number of conflict-free
receivers, does not yield such an easy transfer of results. We will show that
there exists an instance and a no-regret sequence yielding an approximation
guarantee that is linear in the maximum number of receivers per sender. This
problem arises as a sender does not get feedback on how many receivers of other
senders are blocked by its transmission attempts.
\begin{proposition}
In the to-many setting there exists an instance such that every sequence of action vectors with $0$ regret that yields an approximation factor linear in the maximum number of receivers per sender.
\end{proposition}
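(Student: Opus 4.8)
The plan is to construct an explicit \emph{lower-bound instance} together with a zero-regret action sequence that simultaneously (a) is optimal from each sender's selfish perspective, yet (b) achieves only a $1/m$-fraction of the global optimum, where $m$ is the maximum number of receivers per sender. The core difficulty that makes this possible --- and which I would emphasize at the outset --- is the feedback limitation already flagged in the proposition's preamble: a sender only learns how many of \emph{its own} receivers were conflict-free, and receives no signal about how many of \emph{other} senders' receivers it blocked. Thus a single aggressive sender can look locally excellent while imposing a large externality that is invisible to the learning dynamics.

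First I would describe the instance. I would take one ``heavy'' sender $s_0$ whose transmission, when active, blocks exactly one dedicated receiver of each of many ``light'' senders $s_1,\ldots,s_N$, while $s_0$ itself has $m$ receivers all of which it reaches successfully regardless of the others. Each light sender $s_j$ is given $m$ receivers as well, arranged so that in the absence of $s_0$ all $m$ are conflict-free, but in the presence of $s_0$ only $m-1$ (or some constant fraction) remain conflict-free. The key is to tune weights in the conflict graph so that the light senders' receivers are affected by $s_0$ but the reverse interference is negligible, i.e. the light senders cannot dislodge $s_0$. I would verify that the chosen weights respect $C$-independence (this is a routine check against Definition~\ref{def:Asge}) and that the setting is multi-receiver $C$-independent, so the instance is admissible.

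Next I would exhibit the zero-regret sequence: every sender transmits in every phase. For each sender this is the best fixed action in hindsight under the to-many utility, because transmitting always yields strictly more conflict-free receivers (summed over phases) than staying silent --- the point is that $s_0$ never suffers, and each light sender still reaches a positive number of receivers each step, so ``always send'' weakly dominates ``never send'' for everyone. Hence the sequence has zero external regret for every link, and a no-regret learner is entitled to converge to it. The heart of the argument is then the counting comparison: under this sequence the realized to-many throughput is dominated by $s_0$'s constant contribution plus the degraded light-sender contributions, whereas the global optimum would silence $s_0$ to free up the blocked receivers across all $N$ light senders, gaining $\Theta(N)$ receiver-transmissions. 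Choosing parameters so that the ratio scales like $m$ (by letting each light sender have $m$ receivers and making the optimum's advantage proportional to the number of receivers freed) yields the claimed $\Omega(m)$ approximation lower bound.

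The main obstacle I anticipate is the simultaneous balancing of three constraints: the weights must (i) make ``always send'' a genuine zero-regret choice for \emph{every} sender under the to-many utility, (ii) preserve $C$-independence so the instance falls within the paper's model, and (iii) produce a gap that grows linearly in $m$ rather than saturating at a constant. In particular, ensuring that the light senders' presence does not by itself destroy $s_0$'s optimality (so that silencing $s_0$ is an option only for a centralized optimum, never for the selfish learner) requires an asymmetric interference pattern, and I would spend most of the care verifying that this asymmetry is realizable within a $C$-independent conflict graph. Once the instance and its weights are pinned down, the regret computation and the optimum comparison are direct.
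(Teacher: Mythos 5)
Your high-level diagnosis---that the to-many failure stems from a sender receiving no feedback about the externality it imposes on other senders' receivers---is correct, and exhibiting a single zero-regret sequence with a bad objective value is indeed all the proposition requires. But the quantitative heart of your construction does not deliver the claimed gap. In your instance the zero-regret sequence (everyone transmits) collects $m + N(m-1)$ conflict-free receivers per step, while the optimum (silence $s_0$) collects $Nm$. The ratio is $Nm/(m+N(m-1))$, which tends to $m/(m-1)\le 2$ as $N$ grows; replacing ``$m-1$ survive'' by ``a constant fraction survive'' still leaves the ratio bounded by a constant. The asymmetry is backwards: to force a gap linear in the number of receivers, the \emph{blocker} must contribute almost nothing to the objective while the \emph{blocked} sender must lose essentially \emph{all} of its many receivers whenever the blocker is active. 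Your heavy sender does the opposite---it contributes $m$ per step and costs each victim only one receiver---so the step ``choosing parameters so that the ratio scales like $m$'' cannot be carried out within the instance you describe.

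The paper's construction fixes exactly this with only two links: $s_2$ has a single receiver it always reaches (contributing $1$ per step), and $s_1$ has $w$ receivers, all of which are in conflict whenever $s_2$ transmits (realizable in SINR by clustering the $r_{1,j}$ and placing $s_2$ with $r_{2,1}$ close to them). ``Always transmit'' is the best fixed action for $s_2$, so in any zero-regret sequence $s_2$ transmits in almost every step, $s_1$ reaches no receivers regardless of what it does, and the realized objective is about $1$ per step versus $w$ for the optimum in which only $s_1$ transmits. If you redesign your instance along these lines---blocker with one receiver, victim with $m$ receivers that are all wiped out---your regret argument and the $C$-independence check go through and the counting yields the $\Omega(m)$ gap; as written, it does not.
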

\begin{proof}
The network consists of two links -- one with sender $s_1$ and receivers $r_{1,1}$ to $r_{1,w}$ and one with sender $s_2$ and receiver $r_{2,1}$. The receivers $r_{1,i}$ can only be conflict-free if $s_2$ decides not to transmit. The second link can always transmit successfully. This is
constructable in the SINR model by simply putting all $r_{1,j}$ close together and $s_2$ together with $r_{2,1}$ closer to them.

In every no-regret sequence, $s_2$ is transmitting almost all the time and $s_1$ almost never. This implies a total objective function value of 1. In contrast, in $OPT$ only $s_1$ transmits and reaches $w$ receivers.
\end{proof}

\section{Simulation}\label{section:simulate}
To draw a line from the theoretical results in the previous
sections to a more practical point of view, in this section we conduct simulations. 
We simulate randomly generated networks under SINR-interference. This way, we see 
that our proposed approach yields a good convergence towards the optimum.
It is especially promising that the constant-factors used in our proofs seem to be 
negligible in these simulations.

The adversary is a stochastic one which we consider both as a
global and as an individual jammer. The regret-learning algorithm considered is
a variant of the Randomized Weighted Majority Algorithm~\cite{LittlestoneW94}. The algorithm uses transmission probabilities
proportional to weights which are updated based on feedback of previous actions.
The weights are initialized with $1$ and multiplied by $(1-\eta)^{l_a\cdot k}$ in every time step, where $l_0=0.5$ is the loss of not sending and the loss of sending being $l_1=0$ for a successful phase and $l_1=1$ for an unsuccessful phase. 
To increase the effect of learning we multiplied the loss by the length of the phases before updating the weights. These losses correspond
to the utility function used in Section~\ref{section:Interval}. The length of a phase was set to
$k=6 /\delta$ as given in Corollary~\ref{cor:stoch}. The factor
$\eta$ starts with $\sqrt{0.5}$ and is multiplied by $\sqrt{0.5}$ every time the
number of time steps is increased above the next power of $2$.
 
The random networks used in the simulations consist of $200$ links with
receivers randomly placed on a $1000 \times 1000$ plane. Senders are placed with
a random angle and within a random distance between $0$ and $100$ near their respective
receiver. The SINR parameters are $\alpha=2.1$, $\beta=1.1$, and $\nu=4\cdot
10^{-7}$. The transmission power of all senders was set to $2$. This yields
networks where interference from other nodes is the main reason for unsuccessful
transmissions. To simulate links joining in different time steps, we let each
link start its algorithm at a random time step during the first phase (i.e.
during the first $k$ time steps).

\usetikzlibrary{plotmarks}

\begin{figure}
\centering
\begin{minipage}{.49\textwidth}
  \centering
  \begin{tikzpicture}[x=0.015cm, y=0.053cm]

\def\xmin{0}
  \def\xmax{500}
  \def\ymin{0}
  \def\ymax{80}
  
  \def\xleg{200}
  \def\yleg{45}

% grid
%   \draw[style=help lines, ystep=10, xstep=25] (\xmin,\ymin) grid
%   (\xmax,\ymax);
%   \draw[style=help lines, ystep=10, xstep=500] (\xmin,\ymin) grid
%   (\xmax,\ymax);

  % axes
  \draw[->] (\xmin,\ymin) -- (\xmax+10,\ymin);
  \draw[->] (\xmin,\ymin) -- (\xmin,\ymax+5);

  % xticks and yticks
  \foreach \x in {50,100,...,450}
    \node at (\x, \ymin) [below] {\footnotesize \x};
  \foreach \y in {10,20,...,80}
    \node at (\xmin,\y) [left] {\footnotesize \y};

\draw plot[mark=oplus*, only marks, mark size=0.75pt] file
{fig_data/1_global_rwm.csv};
\draw[dotted] plot[mark=none] (0,73)--(500,73);
 \draw[dotted, thick] plot[mark=none] (0,58.4)--(500,58.4);
  \draw[dashed] plot[mark=none] (0,63)--(500,63);
  \draw plot[mark=none] file
  {fig_data/1_indiv_rwm.csv};
% \draw plot[mark=o] file {fig_data/Ray_vs_SINR_uni_Ray.csv};
% \draw plot[mark=+] file {fig_data/Ray_vs_SINR_uni_SINR.csv};
  
\node[rotate=90, yshift=0.6cm] at (\xmin, \ymax/2-\ymin/2) {\footnotesize
 Successful transmissions};
\node[yshift=-0.5cm] at (\xmax/2-\xmin/2, \ymin) {\footnotesize Time};

%legend
\draw[fill=white] (\xleg, \yleg) --
(\xleg+250,\yleg)--(\xleg+250,\yleg-35)--(\xleg,\yleg-35) -- (\xleg,\yleg);

\node[right] at(\xleg, \yleg-3) {\footnotesize  Global Jammer:};
%\node[right] at(\xleg, \yleg-2.2) {\small \qquad \qquad Uniform power};
\node[right] at(\xleg, \yleg-21) {\footnotesize  Individual Jammer:};
%\node[right] at(\xleg, \yleg-4.4) {\small \qquad \qquad Mean power};

\node[right] at(\xleg+50, \yleg-8) {\footnotesize  optimum};
\node[right] at(\xleg+50, \yleg-12) {\footnotesize  expected optimum};
\node[right] at(\xleg+50, \yleg-16) {\footnotesize  no-regret learning};
\node[right] at(\xleg+50, \yleg-26) {\footnotesize  average optimum};
\node[right] at(\xleg+50, \yleg-30) {\footnotesize  no-regret learning};

\draw[dotted] plot[mark=none] (\xleg+20, \yleg-8)--(\xleg+50, \yleg-8);
\draw[dotted, thick] plot[mark=none] (\xleg+20, \yleg-12)--(\xleg+50,
\yleg-12); \draw plot[mark=oplus*, only marks, mark size=0.75pt] (\xleg+40, \yleg-16);
\draw[dashed] plot[mark=none] (\xleg+20, \yleg-26)--(\xleg+50, \yleg-26);
\draw plot[mark=none] (\xleg+20, \yleg-30)--(\xleg+50, \yleg-30);

\end{tikzpicture}
  \captionof{figure}{Number of successful transmissions for no-regret learning over time
against global and individual jammer.}
\label{fig:sim_succ_opt}
\end{minipage}%
\hfill
\begin{minipage}{.49\textwidth}
  \centering
  \begin{tikzpicture}[x=0.0364cm, y=0.061cm]

\def\xmin{0}
  \def\xmax{200}
  \def\ymin{0}
  \def\ymax{70}
  
  \def\xleg{200}
  \def\yleg{45}

% grid
%   \draw[style=help lines, ystep=10, xstep=25] (\xmin,\ymin) grid
%   (\xmax,\ymax);
%   \draw[style=help lines, ystep=10, xstep=500] (\xmin,\ymin) grid
%   (\xmax,\ymax);

  % axes
  \draw[->] (\xmin,\ymin) -- (\xmax+10,\ymin);
  \draw[->] (\xmin,\ymin) -- (\xmin,\ymax+5);

  % xticks and yticks
  \foreach \x in {50,100,...,150}
    \node at (\x, \ymin) [below] {\footnotesize \x};
  \foreach \y in {10,20,...,70}
    \node at (\xmin,\y) [left] {\footnotesize \y};

%\draw plot[mark=none] file {fig_data/2_delta_016.csv};
\draw plot[mark=none] file {fig_data/2_delta_02.csv};
\draw[dotted] plot[mark=none] (0,67.8)--(200,67.8);
  \draw[very thick] plot[mark=none] file
  {fig_data/2_delta_035.csv};
 % \draw plot[mark=none] file  {fig_data/2_delta_05.csv};
  \draw plot[mark=none] file
  {fig_data/2_delta_06.csv};
  \draw[very thick] plot[mark=none] file
  {fig_data/2_delta_07.csv};
  \draw plot[mark=none] file
  {fig_data/2_delta_09.csv}; 
% \draw plot[mark=o] file {fig_data/Ray_vs_SINR_uni_Ray.csv};
% \draw plot[mark=+] file {fig_data/Ray_vs_SINR_uni_SINR.csv};
  
\node[rotate=90, yshift=0.6cm] at (\xmin, \ymax/2-\ymin/2) {\footnotesize
 Successful transmissions};
\node[yshift=-0.5cm] at (\xmax/2-\xmin/2, \ymin) {\footnotesize Time};

\node[right] at(30, 60) {\footnotesize$\delta=0.35$};
\node[right] at(42, 37) {\footnotesize$\delta=0.2$};
%\node[right] at(50, 37) {\footnotesize$\delta=0.16$};
\node[right] at(150, 48) {\footnotesize$\delta=0.6$};
\node[right] at(150, 23) {\footnotesize$\delta=0.7$};
\node[right] at(150, 8) {\footnotesize$\delta=0.9$};
% 
% %legend
% \draw[fill=white] (\xleg, \yleg) --
% (\xleg+250,\yleg)--(\xleg+250,\yleg-35)--(\xleg,\yleg-35) -- (\xleg,\yleg);
% 
% \node[right] at(\xleg, \yleg-3) {\footnotesize  Global Jammer:};
% %\node[right] at(\xleg, \yleg-2.2) {\small \qquad \qquad Uniform power};
% \node[right] at(\xleg, \yleg-21) {\footnotesize  Individual Jammer:};
% %\node[right] at(\xleg, \yleg-4.4) {\small \qquad \qquad Mean power};
% 
% \node[right] at(\xleg+50, \yleg-8) {\footnotesize  optimum};
% \node[right] at(\xleg+50, \yleg-12) {\footnotesize  expected optimum};
% \node[right] at(\xleg+50, \yleg-16) {\footnotesize  no-regret learning};
% \node[right] at(\xleg+50, \yleg-26) {\footnotesize  average optimum};
% \node[right] at(\xleg+50, \yleg-30) {\footnotesize  no-regret learning};
% 
% 
% \draw[dotted] plot[mark=none] (\xleg+20, \yleg-8)--(\xleg+50, \yleg-8);
% \draw[dotted, thick] plot[mark=none] (\xleg+20, \yleg-12)--(\xleg+50,
% \yleg-12); \draw plot[mark=oplus*, only marks, mark size=0.75pt] (\xleg+40, \yleg-16);
% \draw[dashed] plot[mark=none] (\xleg+20, \yleg-26)--(\xleg+50, \yleg-26);
% \draw plot[mark=none] (\xleg+20, \yleg-30)--(\xleg+50, \yleg-30);

\end{tikzpicture}
  \captionof{figure}{Number of successful transmissions using no-regret learning over time
for different values of $\delta$.}
\label{fig:sim_delta}
\end{minipage}
\end{figure}

Figure~\ref{fig:sim_succ_opt} depicts the number of successful transmissions for
one run of the algorithm against a global stochastic adversary and an
individual stochastic adversary with $\delta = 0.8$. To simplify
comparison we additionally plotted the size of the single slot optimum without
jammer $|OPT|$ and the expected average optimum $\delta \cdot
|OPT|$ against the global jammer. For the case of individual
jammers we plotted the average optimum of the considered run. 

Considering the no-regret learning in the case of a global jammer, it is visible
that the behavior in unjammed time steps approaches the actual single slot
optimum. Note that the plot is above the expected optimum as the run is not
averaged over time (the dots on the x-axis representing no throughput in jammed time slots). Besides some
fluctuations in the beginning the algorithm stabilizes during the first $50$ time steps and even reaches the optimum afterwards. So the algorithm yields a reasonable throughput very early on. While this is the run
of our algorithm considering only by one example, Figure~\ref{fig:sim_delta}
averages over multiple runs and shows the same general behavior.

In Figure~\ref{fig:sim_succ_opt} it is clearly visible that no-regret learning
in the case of an individual jammer underlies higher fluctuation. This is due to the optimum
changing in every time step and it is reasonable that these fluctuation will
remain as they are introduced due to the jammer. Nevertheless the
algorithm shows a clear tendency towards stabilizing (besides the mentioned
fluctuations) and approaching the average optimum.

As discussed before it is crucial for the performance of the algorithms to know
the correct parameter $\delta$. In Figure~\ref{fig:sim_delta} we investigate how
using a wrong $\delta$ can have an effect on the algorithm.
Here, we use $\delta$ to denote the parameter used by the algorithm and
$\delta'=0.35$ to denote the one actually used by the global jammer.

 The simulations run over $10$ networks and $1000$ different random seeds for
 the jammer. The plots depict the average success in those runs, where the average
 in a time step is taken only over runs where the channel was not jammed in
 this time step.
 We iterate over $\delta\in\{0.2, 0.35, 0.6, 0.7, 0.9\}$.
The other parameters stay as before. 
Figure~\ref{fig:sim_delta} shows that assuming a low $\delta$ makes the
algorithm converge slower. This is also due to the choice of the length
of a phase $k$ depending on $\delta$. The phases correspond to the ridges
visible in the plots.
Nevertheless a $\delta\leq \delta'$ still allows to approach the optimum while
more time is needed to reach a good approximation. Surprisingly, by assuming 
$\delta>\delta'$ the performance does not seem to suffer severely.
No-regret learning seems quite robust against using the wrong $\delta$. For
$\delta=0.6$ the algorithm still converges slowly to a reasonable approximation.
As $\delta$ reaches $2\delta'$ this changes and the algorithms tend to converge to not sending. The adversary obviously tricks the algorithm into believing there is much interference and this way the algorithms reduce their transmission probability. This results in a drop of performance as expected.

In conclusion no-regret learning can be used to successfully tackle
capacity maximization with jamming in both theory and simulations.
The constant factors in our analysis appear negligible in simulations 
and the algorithms converge in reasonable time. Also, simulations
imply that assuming a $\delta$ different from the $\delta'$ used by the jammer
is not as bad as one might expect and that performance of no-regret 
learning remains robust in this case.

\bibliographystyle{plain}
%\bibliography{jammer}
\bibliography{../../../Bibfiles/literature,../../../Bibfiles/martin}

%\clearpage
%\begin{appendix}
% \input{appendix}
% \section{Multiple Receivers}\label{section:apdx_multiple}
% \input{multireceiver}
%\end{appendix}
\end{document}